\journal{Journal of Multivariate Analysis}
\newcommand{\pcite}[1]{\citeauthor{#1}'s \cite{#1}}
\newtheorem{theorem}{Theorem}
\newtheorem{proposition}{Proposition}
\newtheorem{lemma}{Lemma}
\newtheorem{corollary}{Corollary}
\theoremstyle{remark}
\newtheorem{remark}{Remark}
\newcommand{\cX}{\mathcal{X}}
\newcommand{\real}{\mathbb{R}}
\begin{document}
\begin{frontmatter}

\title{Multivariate initial sequence estimators in Markov chain Monte Carlo}

\author{Ning Dai and Galin L. Jones}
\address{School of Statistics, University of Minnesota, daixx224@umn.edu, galin@umn.edu}

\begin{abstract}
  Markov chain Monte Carlo (MCMC) is a simulation method commonly used
  for estimating expectations with respect to a given distribution. We
  consider estimating the covariance matrix of the asymptotic
  multivariate normal distribution of a vector of sample
  means. \citet{geye:1992} developed a Monte Carlo error estimation
  method for estimating a univariate mean. We propose a novel
  multivariate version of Geyer's method that provides an
  asymptotically valid estimator for the covariance matrix and results
  in stable Monte Carlo estimates. The finite sample properties of the
  proposed method are investigated via simulation experiments.
\end{abstract}

\begin{keyword}
Markov chain Monte Carlo \sep
covariance matrix estimation \sep
central limit theorem \sep
Metropolis--Hastings algorithm \sep
Gibbs sampler
\end{keyword}

\end{frontmatter}

\section{Introduction}
\label{sec:introduction}

Many distributions encountered in modern applications are intractable
in the sense that it is difficult to calculate expectations without
resorting to simulation-based methods.  If it is difficult to simulate
independent realizations from the target distribution, then it is
natural to turn to Markov chain Monte Carlo (MCMC).  An MCMC
experiment consists of generating a realization of an irreducible
Markov chain having the distribution of interest as its stationary
distribution \citep{liu:2001, robe:case:1999}. The simulated data may
then be used to estimate a vector of means associated with the
stationary distribution. The reliability of this estimation can be
assessed by forming asymptotically valid confidence regions for the
means of the stationary distribution \citep{fleg:jone:2011,
  fleg:hara:jone:2008, geye:1992, jone:hobe:2001,
  jone:hara:caff:neat:2006,vats:fleg:jone:moa:2015}. (There is a
simliar approach to quantile estimation
\cite{doss:fleg:jone:neat:2014}.)  The confidence regions are based on
estimating the covariance matrix in a multivariate Markov chain
central limit theorem (CLT).  We propose and study a novel method for
estimating this covariance matrix.

Estimating the covariance matrix has been mostly ignored in the MCMC
literature until recently. \citet{vats:fleg:jone:moa:2015} and \citet{vats:fleg:jone:sve:2016} studied non-overlapping batch means and
spectral methods, respectively, and found that these estimators often
underestimate the size of the confidence regions and overestimate the
effective sample size unless the Monte Carlo sample sizes are
enormous. \citet{Kosorok2000} proposed an estimator that is closer in
spirit to ours than the spectral and batch means methods, but we will
see later that it typically overestimates the effective sample size,
resulting in overconfidence in the reliability of the simulation.  We propose
alternative estimators of the covariance matrix that require weaker
mixing conditions on the Markov chain and weaker moment conditions on
the function of interest than those required by batch means and
spectral methods. Specifically, our method applies as long as a Markov
chain CLT holds and detailed balance is satisfied, which is not enough
to guarantee the asymptotic validity of batch means or spectral
methods.  We show that the proposed estimators are asymptotically
valid and study their empirical performance.  The problem we consider
will now be described more formally.

Let $F$ be a distribution having support $\cX$ and if $p \ge 1$, let
$g : \cX \to \real^p$ be $F$-integrable and set
\[
\mu = \mathrm{E}_{F} \left\{g(X)\right\}=\int_{\cX}g(x)F(dx).
\]
Also let $\Phi=\{X_0,X_1,X_2,\ldots\}$ be a Harris ergodic---namely, irreducible,
aperiodic and Harris recurrent---Markov chain having invariant
distribution $F$. By averaging the function over a realization of
$\Phi$, estimation of $\mu$ is straightforward since, with probability
1,
\[
\mu_n=\frac{1}{n}\sum\limits_{i=1}^n g(X_i) \to \mu~~~ \text{ as } ~n \to \infty.
\]
The Markov chain strong law justifies the use of MCMC but provides no
information about the quality of estimation or how large the
simulation size $n$ should be.  More specifically, additional
information is needed to answer either of the following two questions.
\begin{enumerate}
\item Given a pre-specified run length $n$, how reliable is $\mu_n$ as
  an estimate of $\mu$? Specifically, how do we construct a confidence
  region for $\mu$? 
\item How large should the simulation size $n$ be to ensure a
  reliable estimate of $\mu$?
\end{enumerate}
We can address these issues through the approximate sampling
distribution of the \textit{Monte Carlo error}, $\mu_n-\mu$.  A Markov
chain CLT exists when there is a positive definite matrix $\Sigma$
such that, as $n\to\infty$,
\begin{equation}\label{eq:multiCLT}
\sqrt{n} \, (\mu_n-\mu)\rightsquigarrow\mathcal{N}_{p} (0,\Sigma).
\end{equation}
See \citet{jone:2004} and \citet{robe:rose:2004} for conditions which
ensure a CLT.  Notice that, due to the serial correlation inherent to
the Markov chain, $\Sigma\neq \mathrm{var}_{F} \{g(X)\}$ except in
trivial cases. In Section~\ref{sec:construction} we propose two new
estimators of $\Sigma$. For now, let $\Sigma_n$ be a generic positive
definite estimator of $\Sigma$.

A confidence region for $\mu$ constructed using $\Sigma_n$ forms an
ellipsoid in $p$ dimensions oriented along the directions of the
eigenvectors of $\Sigma_n$. Let $|\cdot|$ denote determinant. One can
verify by straightforward calculation that the volume of the
confidence region is proportional to $\sqrt{|\Sigma_n|}$ and thus
depends on the estimated covariance matrix $\Sigma_n$ only through the
estimate $|\Sigma_n|$ of the \textit{generalized variance of the Monte
  Carlo error}, $|\Sigma|$. The volume of the confidence region can
describe whether the simulation effort is sufficiently large to
achieve the desired level of precision in estimation
\citep{jone:hara:caff:neat:2006, fleg:hara:jone:2008,
  vats:fleg:jone:moa:2015}.

Another common and intuitively reasonable method for choosing the
simulation effort is to simulate until a desired effective sample size
(ESS), i.e., the number with the property that $\mu_n$ has the same
precision as the sample mean obtained by that number of independent
and identically distributed (iid) samples, has been achieved
\citep{Atkinson2008, Drummond2006, Giordano2015}. Let
$\Lambda= \mathrm{var}_{F}\{g(X)\}$. \citet{vats:fleg:jone:moa:2015}
introduced the following definition of effective sample size
\begin{equation}
\label{eq:ESS}
\text{ESS} = n\left({|\Lambda|}/{|\Sigma|}\right)^{{1}/{p}},
\end{equation}
which is naturally estimated with $n(|\Lambda_n|/|\Sigma_n|)^{1/p}$
where $\Lambda_n$ is an estimator of $\Lambda$, e.g., the usual sample
covariance matrix. \citet{vats:fleg:jone:moa:2015} showed that
terminating the simulation based on the effective sample size is
equivalent to termination based on a relative confidence region where
the Monte Carlo error is compared to size of the uncertainty in the
target distribution.  The point is that again a common method for
assessing the reliability of the simulation is determined by the
estimated generalized variance of the Monte Carlo error.

The estimators of $\Sigma$ studied by \citet{Kosorok2000}, \citet{vats:fleg:jone:moa:2015}, and \citet{vats:fleg:jone:sve:2016}
typically underestimate the generalized variance. We will propose 
a different method and show that it is asymptotically valid. Specifically, our method provides a consistent overestimate for the asymptotic generalized variance of the Monte Carlo error and therefore will result in a slightly larger simulation effort, leading to a more stable estimation process.

The rest of the paper is organized as follows. In
Section~\ref{sec:notation} we develop notation and background in
preparation for the estimation theory. In
Section~\ref{sec:construction} we propose our method and establish its
asymptotic validity. In Section~\ref{sec:simulation} we examine the
finite sample properties of the proposed method through a variety of
examples. We consider a Bayesian logistic regression example of 5
covariates where a symmetric random walk Metropolis--Hastings algorithm
is implemented to calculate the posterior mean of the regression
coefficient vector, a Bayesian one-way random effects model where we
use a random scan Gibbs sampler to estimate the posterior expectation
of all 8 parameters, and a reversible multivariate AR(1) process that
takes values in $\real^{12}$. We illustrate the use of multivariate
methods in a meta-analysis application where the posterior has
dimension 65.

\section{Notation and background}
\label{sec:notation}

Recall that $F$ has support $\cX$ and let $\mathcal{B}(\cX)$ be a
$\sigma$-algebra. For $n \in \mathbb{N}^+ = \{1, 2, 3,\ldots\}$ let
$P^{n}(x,dy)$ be the $n$-step Markov transition kernel so that for $x
\in \cX$, $B \in \mathcal{B}(\cX)$, and $k \in \mathbb{N}=\{0,1,2,\ldots\}$ we
have $P^{n}(x,B) = \Pr (X_{k+n} \in B \mid X_{k} = x)$, where $\Pr$ denotes probability.  We assume that
$P$ satisfies detailed balance with respect to $F$.  That is,
\begin{equation}
\label{eq:dbc}
F(dx) P(x,dy) = F(dy) P(y, dx) \; .
\end{equation}
Metropolis--Hastings algorithms satisfy \eqref{eq:dbc} by construction
as do many component-wise Markov chains, such as random scan or random
sequence scan algorithms \cite{john:jone:neat:2013}.  By integrating
both sides of \eqref{eq:dbc} it is easy to see that $F$ is invariant
for $P$.  Suppose $X_0 \sim F$, that is the Markov chain is
stationary.  The assumption of stationarity is not crucial since, for
Harris recurrent chains, if a CLT holds under stationarity, it holds
for all initial distributions \cite[][Proposition~17.1.6]{MeynTweedie1993}.

The lag $t$ autocovariance of the process $g(X_0), g(X_1), g(X_2),
\ldots$ is defined as
$
\gamma_t=\gamma_{-t}=\mathrm{cov}_F\{ g(X_i),g(X_{i+t})\}.
$
Denote the sum of an adjacent pair of
autocovariances by $\Gamma_i=\gamma_{2i}+\gamma_{2i+1}$ for $i\in\mathbb{N}$ and its smallest eigenvalue by $\xi_i$. 

We use the shorthand $\infty$ for $+\infty$ unless otherwise specified. If $\sum_{t=0}^\infty\gamma_t$ converges, the asymptotic covariance matrix in \eqref{eq:multiCLT} can be written as \cite{KipnisVaradhan1986}
\begin{equation}\label{eq:Sigma_inf}
\Sigma = \sum_{t=-\infty}^{+\infty} \gamma_t=-\gamma_0 +
\sum_{t=0}^{\infty}( \gamma_t + \gamma_{-t}) = -\gamma_0 + 2
\sum_{t=0}^{\infty} \gamma_t =-\gamma_0+2
\sum_{i=0}^{\infty}\Gamma_i.
\end{equation}

The following propositions will play a significant role in the
development of the new estimation method in
Section~\ref{sec:construction}.

\begin{proposition}\label{prop:1}
The following properties of the sequences $\{\Gamma_i: i\in \mathbb{N}\}$
and $\{\xi_i : i\in \mathbb{N}\}$ hold.
\begin{enumerate}
\item [(i)] $\Gamma_i$ is positive-definite, for all $i\in\mathbb{N}$.\label{prop1:1}
\item [(ii)] $\Gamma_i-\Gamma_{i+1}$ is positive-definite, for all $i\in\mathbb{N}$.\label{prop1:2}
\item [(iii)] $\lim_{i\to\infty}\Gamma_i=0$.\label{prop1:3}
\item [(iv)] The sequence $\{\xi_i : i\in \mathbb{N}\}$ is positive, decreasing, and converges to $0$.\label{prop1:4}
\end{enumerate}
\end{proposition}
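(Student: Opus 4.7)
The plan is to exploit the reversibility condition \eqref{eq:dbc}, which makes the Markov operator $P$ bounded and self-adjoint on $L^{2}(F)$ with spectrum in $[-1,1]$, and to apply the spectral theorem. For each $a\in\real^{p}$ set $h_{a}(x)=a^{T}\{g(x)-\mu\}\in L_{0}^{2}(F)$. There is then a finite positive Borel measure $E_{a}$ on $[-1,1]$ such that
$$
a^{T}\gamma_{t}a \;=\; \bigl\langle h_{a},P^{t}h_{a}\bigr\rangle_{L^{2}(F)} \;=\; \int_{-1}^{1}\lambda^{t}\,dE_{a}(\lambda).
$$
The assumption that $\Sigma$ is positive definite is, in this language, the statement that $a^{T}\Sigma a=\int_{-1}^{1}\tfrac{1+\lambda}{1-\lambda}\,dE_{a}(\lambda)$ is finite and positive for every $a\neq 0$, which simultaneously excludes an atom at $\lambda=1$ and forces $E_{a}$ to have nonzero mass on $(-1,1)$.

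\paragraph{Parts (i) and (ii).} Summing the spectral representation over two consecutive lags gives
$$
a^{T}\Gamma_{i}a \;=\; \int_{-1}^{1}\lambda^{2i}(1+\lambda)\,dE_{a}(\lambda),
$$
and the integrand is nonnegative on $[-1,1]$, so $\Gamma_{i}$ is positive semi-definite. Strict positivity for $a\neq 0$ follows because $\lambda^{2i}(1+\lambda)>0$ on $(-1,1)\setminus\{0\}$, a set on which $E_{a}$ has positive mass by the $\Sigma$-positivity discussed above. For (ii), the same identity yields
$$
a^{T}(\Gamma_{i}-\Gamma_{i+1})a \;=\; \int_{-1}^{1}\lambda^{2i}(1+\lambda)\bigl(1-\lambda^{2}\bigr)\,dE_{a}(\lambda) \;=\; \int_{-1}^{1}\lambda^{2i}(1+\lambda)^{2}(1-\lambda)\,dE_{a}(\lambda),
$$
whose integrand is again nonnegative on $[-1,1]$ and strictly positive on $(-1,1)\setminus\{0\}$, giving the desired (semi-)definiteness.

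\paragraph{Parts (iii) and (iv).} For (iii), the integrand $\lambda^{2i}(1+\lambda)$ is bounded by $2$ and converges pointwise to $0$ for all $\lambda\in[-1,1)$; since $\lambda=1$ is an $E_{a}$-null point, dominated convergence gives $a^{T}\Gamma_{i}a\to 0$ for every $a$, hence $\Gamma_{i}\to 0$. Part (iv) is then immediate: positivity $\xi_{i}>0$ comes from (i); the Courant--Fischer characterization $\xi_{i}=\min_{\|a\|=1}a^{T}\Gamma_{i}a$ together with (ii) gives $\xi_{i+1}\le\xi_{i}$; and $\xi_{i}\to 0$ follows from (iii) since convergence of symmetric matrices to zero entails convergence of every eigenvalue.

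\paragraph{Main obstacle.} The only delicate point is translating the assumed positive definiteness of $\Sigma$ into the support statement about $E_{a}$ needed for the strict (rather than merely semi-) definiteness in (i) and (ii); once the spectral representation is set up, every other step is a routine consequence of the sign of an explicit polynomial on $[-1,1]$ together with dominated convergence and the min-max principle.
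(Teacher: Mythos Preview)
Your argument follows the same spectral-representation route as the paper and is largely correct; the paper first proves the univariate case and then reduces the multivariate case to it via $h=v^{\top}g$, whereas you parametrize by $a$ from the outset, but the content is identical. For (iii) the paper instead notes that $\sum_i\Gamma_i$ converges by assumption and invokes the elementary fact that the terms of a convergent series tend to zero; your dominated-convergence argument is an acceptable alternative once $E_a(\{1\})=0$.

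There is, however, one genuine gap. To obtain \emph{strict} positive definiteness in (i)--(ii) for $i\ge1$ you assert that $E_a$ has positive mass on $(-1,1)\setminus\{0\}$, citing positive definiteness of $\Sigma$. But positive definiteness of $\Sigma$ only yields $E_a\bigl((-1,1)\bigr)>0$: if $E_a$ happens to be a point mass at $0$ (so that $a^{\top}g$ is serially uncorrelated), then $a^{\top}\Sigma a=E_a(\{0\})>0$ while $a^{\top}\Gamma_i a=\int\lambda^{2i}(1+\lambda)\,dE_a=0$ for every $i\ge1$, and the proposition fails in that direction. The paper faces exactly the same obstruction and deals with it not through $\Sigma$ but by (a) citing H\"aggstr\"om--Rosenthal to obtain $\mathcal E_g(\{-1,1\})=0$ from irreducibility and aperiodicity, and (b) explicitly declaring the case $\mathcal E_g(\{0\})=1$ ``trivial'' and excluding it by assumption. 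You need to impose the same nondegeneracy; it is not a consequence of $\Sigma$ being positive definite.
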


\begin{proof}
See Appendix~\ref{app:propositions}.
\end{proof}

Recall \eqref{eq:Sigma_inf} and let the $m$th partial sum be denoted
\begin{equation}\label{eq:Sigma_m}
\Sigma_m =-\gamma_0+\sum\limits_{t=0}^{2m+1}(\gamma_t+\gamma_{-t})=-\gamma_0 + 2\sum_{i=0}^m \Gamma_i  .
\end{equation}

\begin{proposition}\label{prop:2}
The following properties of the sequence $\{\Sigma_m : m\in \mathbb{N}\}$ hold.
\begin{enumerate}
\item [(i)] There exists a non-negative integer $m_0$ such that $\Sigma_m$
  is positive definite for $m\geq m_0$ and not positive definite for
  $m< m_0$. Specifically, when $m_0=0$, $\Sigma_m$ is positive
  definite for all $m$.\label{prop2:1}
\item [(ii)] The sequence $\{|\Sigma_m| :  m=m_0,m_0+1,m_0+2,\ldots\}$ is positive,
  increasing, and converges to $|\Sigma|$.\label{prop2:2}
\end{enumerate}
\end{proposition}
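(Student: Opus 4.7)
The plan is to first establish the monotonicity structure $\Sigma_{m+1}\succ\Sigma_m$ in the Loewner order, combine it with the convergence $\Sigma_m\to\Sigma$ to a positive definite limit, and then read off both parts (i) and (ii). The key algebraic identity throughout is
\[
\Sigma_{m+1}-\Sigma_m = 2\Gamma_{m+1},
\]
which is positive definite by Proposition~\ref{prop:1}(i). Moreover, Proposition~\ref{prop:1}(iii) together with \eqref{eq:Sigma_inf} yields $\Sigma_m\to\Sigma$ in any matrix norm. Since $\Sigma$ is positive definite by the CLT assumption \eqref{eq:multiCLT}, continuity of eigenvalues implies that the smallest eigenvalue of $\Sigma_m$ tends to the strictly positive smallest eigenvalue of $\Sigma$.

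For (i) I would define $m_0$ to be the smallest $m\in\mathbb{N}$ for which $\Sigma_m$ is positive definite; the set is non-empty by the eigenvalue convergence just noted, and $m_0=0$ exactly when $\Sigma_0=\gamma_0+2\gamma_1$ is already positive definite. For $m\ge m_0$, a one-line induction using $\Sigma_{m+1}=\Sigma_m+2\Gamma_{m+1}$ preserves positive definiteness, since the sum of two positive definite matrices is positive definite. By the very definition of $m_0$, $\Sigma_m$ fails to be positive definite for $m<m_0$, giving the stated dichotomy.

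For (ii), positivity of $|\Sigma_m|$ for $m\ge m_0$ is immediate from positive definiteness. For strict monotonicity I would invoke the standard fact that the determinant is strictly increasing in the Loewner order on the positive definite cone: if $0\prec A\prec B$, then $I\prec A^{-1/2}BA^{-1/2}$, whose eigenvalues all exceed $1$, so $|B|/|A|>1$. Applied with $A=\Sigma_m$ and $B=\Sigma_{m+1}=\Sigma_m+2\Gamma_{m+1}$, this gives $|\Sigma_{m+1}|>|\Sigma_m|$. The convergence $|\Sigma_m|\to|\Sigma|$ then follows from $\Sigma_m\to\Sigma$ and continuity of the determinant.

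The only real subtlety is in (i): one must rule out pathologies in which $\Sigma_m$ could gain, lose, and then regain positive definiteness as $m$ grows, which would make the clean threshold $m_0$ meaningless. The preservation step $\Sigma_{m+1}=\Sigma_m+2\Gamma_{m+1}$ with both summands positive definite is precisely what prevents this and justifies the threshold description claimed in the proposition.
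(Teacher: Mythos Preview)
Your proposal is correct and follows essentially the same approach as the paper: both arguments hinge on the identity $\Sigma_{m+1}-\Sigma_m=2\Gamma_{m+1}\succ 0$, the convergence $\Sigma_m\to\Sigma$, continuity of eigenvalues to obtain eventual positive definiteness, and Loewner monotonicity of the determinant on the positive definite cone. The only cosmetic differences are that the paper tracks the smallest eigenvalue $\lambda_m$ explicitly and shows it is strictly increasing (via a Weyl-type inequality) to locate $m_0$, whereas you define $m_0$ directly as the least index with $\Sigma_{m_0}\succ 0$ and then propagate positive definiteness forward by induction; and the paper proves $|\Sigma_{m+1}|>|\Sigma_m|$ by comparing ordered eigenvalues, while you use the congruence $A^{-1/2}BA^{-1/2}\succ I$.
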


\begin{proof}
See Appendix~\ref{app:propositions}.
\end{proof}

\begin{remark}\label{rm:asp}
  The value of $m_0$ is difficult to calculate explicitly because
  $\Sigma_m$ is usually not available in closed form. However, in
  Section~\ref{sub:AR1} we consider a multivariate AR(1) Markov chain
  and verify that $m_0=0$. In the other examples, we cannot establish
  $m_0=0$ directly, but in our simulations we never observed anything
  else in 2000 independent replications.
  \end{remark}

\section{Estimation method}
\label{sec:construction}
A natural estimator of the lagged autocovariance $\gamma_t$ is the
empirical autocovariance
\begin{equation*}
\gamma_{n,t} = \gamma_{n,-t}^\top = \frac{1}{n} \sum_{i=1}^{n-t}
\left\{g(X_i)-\mu_n\right\} \left\{g(X_{i+t})-\mu_n\right\}^\top \,
\end{equation*}
where $^\top$ denotes transpose. Set
$
\widetilde{\gamma}_{n,t}= (\gamma_{n,t}+\gamma_{n,-t})/2
$
for $t \in \{0,\ldots,n-1\}$ and write the sum of the $i$th ($0\leq i\leq \lfloor n/2-1\rfloor$) adjacent pair as
$
\Gamma_{n,i}=\widetilde{\gamma}_{n,2i}+\widetilde{\gamma}_{n,2i+1}.
$
By construction, $\Gamma_{n,i}$ is symmetric. Let $\xi_{n,i}$ denote its smallest eigenvalue. The empirical estimator of $\Sigma_m$ ($0\leq m\leq \lfloor n/2-1\rfloor$) is
\begin{equation}\label{eq:Sigma_n,m}
\Sigma_{n,m}
=-\gamma_{n,0}+\sum\limits_{t=0}^{2m+1}(\gamma_{n,t}+\gamma_{n,-t})
=-\gamma_{n,0}+2\sum\limits_{i=0}^m\Gamma_{n,i}.
\end{equation}
Notice how \eqref{eq:Sigma_n,m} parallels \eqref{eq:Sigma_m}. 

\subsection{Multivariate initial sequence estimators}
\label{sub:extending}

We are now in position to formally define the multivariate initial
sequence (mIS) estimator.  Let $s_n$ be the smallest integer such that
$\Sigma_{n,s_n}$ is positive definite and let $t_n$ be the largest
integer $m$ ($s_n\leq m\leq \lfloor n/2-1\rfloor$) such that
$|\Sigma_{n,i}|>|\Sigma_{n,i-1}|$ for all $i\in \{s_n+1,\ldots,m\}$. Then the mIS estimator, denoted $\Sigma_{\mathrm{seq},n}$, is defined as
$
\Sigma_{\mathrm{seq},n}=\Sigma_{n,t_n} .
$
It is possible that $\Sigma_{n,m}$ fails to be positive definite for all
$m \in \{0,\ldots, \lfloor n/2-1\rfloor\}$, and consequently $s_n$
does not exist. Fortunately, when $n$ is sufficiently large, we can
always find such $s_n$.

\begin{theorem}\label{th:feasible}
  With probability 1, $s_n$ exists as $n\to\infty$.  In particular,
  with probability 1, $s_n \to m_0$ as $n\to\infty$.
\end{theorem}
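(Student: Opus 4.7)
The plan is to combine the Markov chain strong law of large numbers with the phase change at $m_0$ established in Proposition~\ref{prop:2} and the strict Loewner monotonicity from Proposition~\ref{prop:1}(i). Since $\Phi$ is Harris ergodic, the Markov chain strong law applied to the integrable functions $g(X_i)g(X_{i+t})^\top$ gives $\gamma_{n,t}\to\gamma_t$ almost surely as $n\to\infty$ for each fixed $t$; summing the finitely many terms defining $\Sigma_{n,m}$ in~\eqref{eq:Sigma_n,m} then yields $\Sigma_{n,m}\to\Sigma_m$ almost surely for each fixed $m$, whence the smallest eigenvalue $\lambda_{\min}(\Sigma_{n,m})\to\lambda_{\min}(\Sigma_m)$ almost surely by continuity of $\lambda_{\min}$ on the symmetric matrices.

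With this in hand, I would establish both $s_n\leq m_0$ and $s_n\geq m_0$ eventually. For the upper bound, since $\Sigma_{m_0}$ is positive definite by Proposition~\ref{prop:2}(i) and positive definiteness is an open condition on symmetric matrices, $\Sigma_{n,m_0}$ is positive definite for all large $n$ almost surely, so $s_n$ exists and $s_n\leq m_0$. For the lower bound when $m_0\geq 1$, Proposition~\ref{prop:1}(i) gives $\Gamma_j\succ 0$ for every $j$, and Weyl's inequality yields $\lambda_{\min}(\Sigma_m)=\lambda_{\min}(\Sigma_{m-1}+2\Gamma_m)>\lambda_{\min}(\Sigma_{m-1})$, so $m\mapsto\lambda_{\min}(\Sigma_m)$ is strictly increasing. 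Since $\lambda_{\min}(\Sigma_m)\leq 0$ for $m<m_0$ by Proposition~\ref{prop:2}(i), strict monotonicity forces $\lambda_{\min}(\Sigma_m)<\lambda_{\min}(\Sigma_{m+1})\leq 0$ for every $m\leq m_0-2$. Continuity of $\lambda_{\min}$ together with the a.s.\ convergence then gives $\lambda_{\min}(\Sigma_{n,m})<0$ for all large $n$, so $\Sigma_{n,m}$ is not positive definite and $s_n>m$.

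The main obstacle is the boundary index $m=m_0-1$: the monotonicity argument above only yields $\lambda_{\min}(\Sigma_{m_0-1})<\lambda_{\min}(\Sigma_{m_0})$, and the right side is merely positive, so one cannot directly conclude strict negativity of $\lambda_{\min}(\Sigma_{m_0-1})$. When $\lambda_{\min}(\Sigma_{m_0-1})<0$ strictly, the continuity argument applies verbatim and shows $s_n=m_0$ for all large $n$. The remaining degenerate case $\lambda_{\min}(\Sigma_{m_0-1})=0$---where a zero eigenvalue of the limit could perturb to either sign along $\{\Sigma_{n,m_0-1}\}$---requires a dedicated argument, either strengthening Proposition~\ref{prop:2} to rule it out under detailed balance, or showing the event $\{\Sigma_{n,m_0-1}\text{ positive definite}\}$ occurs only finitely often almost surely. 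I expect this boundary analysis to be the most delicate ingredient, since it is what upgrades the set-valued statement $s_n\in\{m_0-1,m_0\}$ eventually to the claimed $s_n\to m_0$.
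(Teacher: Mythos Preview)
Your approach is essentially the one the paper takes: strong consistency of $\gamma_{n,t}$ and hence of $\Sigma_{n,m}$ for each fixed $m$, continuity of the smallest eigenvalue, and the phase change at $m_0$ from Proposition~\ref{prop:2}. The paper packages the argument into a single lemma showing $\Pr\bigl(\liminf_{n\to\infty}\{s_n=m_0,\,t_n\geq K\}\bigr)=1$ for every $K\geq m_0$, from which Theorem~\ref{th:feasible} follows by the containment $\{s_n\text{ exists}\}\supset\{s_n=m_0\}$.

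Your instinct about the boundary index $m_0-1$ is exactly right, and in fact the paper does \emph{not} resolve it. The key lemma is proved under the hypothesis that $\liminf_{n\to\infty}\{\lambda_{n,m_0-1}\leq 0\}$ occurs almost surely, and a remark afterward observes that this hypothesis is vacuous when $m_0=0$ and automatic when $\lambda_{\min}(\Sigma_{m_0-1})<0$ strictly; the degenerate case $\lambda_{\min}(\Sigma_{m_0-1})=0$ is simply left as an assumption. So you should not expect to find a ``dedicated argument'' for it---the paper carries it as a standing hypothesis rather than proving it from detailed balance.
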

\begin{proof}
See Appendix \ref{app:theorems}.
\end{proof}

Thus mIS is feasible while the following establishes that it is asymptotically valid.

\begin{theorem}\label{th:mIS}
With probability 1, $\liminf\limits_{n\to\infty}
|\Sigma_{\mathrm{seq},n}| \geq |\Sigma|$.
\end{theorem}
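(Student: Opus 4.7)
The plan is to exploit the strict monotonicity in Proposition~\ref{prop:2}(ii) together with the strong consistency of the empirical autocovariances, and then argue that for any fixed truncation level $M\geq m_0$ we eventually have $t_n\geq M$. Combined with the defining property that $|\Sigma_{n,\cdot}|$ is strictly increasing on $\{s_n,\ldots,t_n\}$, this will give $|\Sigma_{\mathrm{seq},n}|\geq|\Sigma_{n,M}|$, whereupon passing to the limit first in $n$ and then in $M$ yields the claimed bound.

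First I would invoke the ergodic theorem for the Harris ergodic chain $\Phi$ to conclude that $\gamma_{n,t}\to\gamma_t$ almost surely for every fixed lag $t$. Since $\Sigma_{n,m}$ is a finite linear combination of empirical autocovariances, it follows that $\Sigma_{n,m}\to\Sigma_m$ and hence $|\Sigma_{n,m}|\to|\Sigma_m|$ almost surely for every fixed $m$. Taking a countable intersection of null sets over $m\in\mathbb{N}$ and intersecting with the full-measure event from Theorem~\ref{th:feasible} on which $s_n\to m_0$, I can work on a single almost-sure event on which all of these limits hold simultaneously. Now fix $M\geq m_0$. Proposition~\ref{prop:2}(ii) supplies the strict chain $|\Sigma_{m_0}|<|\Sigma_{m_0+1}|<\cdots<|\Sigma_M|$, so the pointwise convergence $|\Sigma_{n,m}|\to|\Sigma_m|$ for the finitely many $m\in\{m_0,\ldots,M\}$ forces, for all sufficiently large $n$, the strict inequalities $|\Sigma_{n,i}|>|\Sigma_{n,i-1}|$ for every $i\in\{m_0+1,\ldots,M\}$. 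Since $s_n$ is integer-valued and tends to $m_0$, eventually $s_n=m_0$, and then the definition of $t_n$ as the largest index up to which these strict increases persist forces $t_n\geq M$.

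By the monotonicity built into the definition of $t_n$ it follows that $|\Sigma_{\mathrm{seq},n}|=|\Sigma_{n,t_n}|\geq|\Sigma_{n,M}|$ for all sufficiently large $n$, and taking $\liminf$ as $n\to\infty$ gives $\liminf_{n\to\infty}|\Sigma_{\mathrm{seq},n}|\geq|\Sigma_M|$. Since $M\geq m_0$ was arbitrary and $|\Sigma_M|\to|\Sigma|$ by Proposition~\ref{prop:2}(ii), letting $M\to\infty$ finishes the proof. The main subtlety, rather than a serious difficulty, is to arrange all the almost-sure convergences on a common full-measure set; the strict inequalities supplied by Proposition~\ref{prop:2}(ii) are essential, as they are what allow the sampling noise in $\Sigma_{n,m}$ to be absorbed for any fixed finite truncation.
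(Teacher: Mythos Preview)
Your proposal is correct and follows essentially the same approach as the paper's proof: fix a truncation level $M\ge m_0$ (the paper calls it $K_\epsilon$), use the strict inequalities of Proposition~\ref{prop:2}(ii) together with the a.s.\ convergence $|\Sigma_{n,m}|\to|\Sigma_m|$ and $s_n\to m_0$ to force $t_n\ge M$ eventually, exploit the built-in monotonicity of $|\Sigma_{n,\cdot}|$ on $\{s_n,\ldots,t_n\}$ to get $|\Sigma_{n,t_n}|\ge|\Sigma_{n,M}|$, and then let $M\to\infty$. The only cosmetic difference is that you work directly on a single full-measure event, whereas the paper packages the ``eventually $s_n=m_0$ and $t_n\ge K$'' step as a separate lemma proved via probability statements; your presentation is if anything slightly cleaner.
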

\begin{proof}
See Appendix \ref{app:theorems}.
\end{proof}

In the construction of $\Sigma_{\mathrm{seq},n}$ we update
$\Sigma_{n,i}$ to $\Sigma_{n,i+1} = \Sigma_{n,i} +
2\Gamma_{n,i+1}$. If $\Gamma_{n,i+1}$ has negative eigenvalues, adding
$2\Gamma_{n,i+1}$ will squeeze the corresponding confidence region in
undesirable directions. A remedy is to force the negative eigenvalues
of $\Gamma_{n,i+1}$ to be 0. Suppose $\Gamma_{n,i+1}$ has eigen-decomposition $\Gamma_{n,i+1} = Q^\top \Lambda Q$ where $\Lambda = \mathrm{diag} 
(\lambda_1,\ldots,\lambda_p)$. Define the positive part of $\Gamma_{n,i+1}$ as $\Gamma_{n,i+1}^+ = Q^\top\Lambda^+ Q$, where $\Lambda^+ =
  \mathrm{diag}(\max\{\lambda_1,0\}, \ldots,
  \max\{\lambda_p,0\})$.

This leads us to define the adjusted multivariate initial sequence (mISadj) estimator.
Let $s_n$ and $t_n$ be as in the definition of mIS and let
\[
\widetilde{\Sigma}_{n,t_n}=\Sigma_{n,s_n}+2\sum_{i=s_n+1}^{t_n}\Gamma_{n,i}^+
\]
where $\Gamma^+_{n,i}$ is the positive part of $\Gamma_{n,i}$. Then the
mISadj estimator, denoted $\Sigma_{\mathrm{adj},n}$, is defined as
$
\Sigma_{\mathrm{adj},n}=\widetilde{\Sigma}_{n,t_n}.
$ See Figure~\ref{fig:adjdemo} for a display of the effect of using mISadj over mIS.
\begin{figure}[b!]
  \centering
    \includegraphics[width=0.35\linewidth]{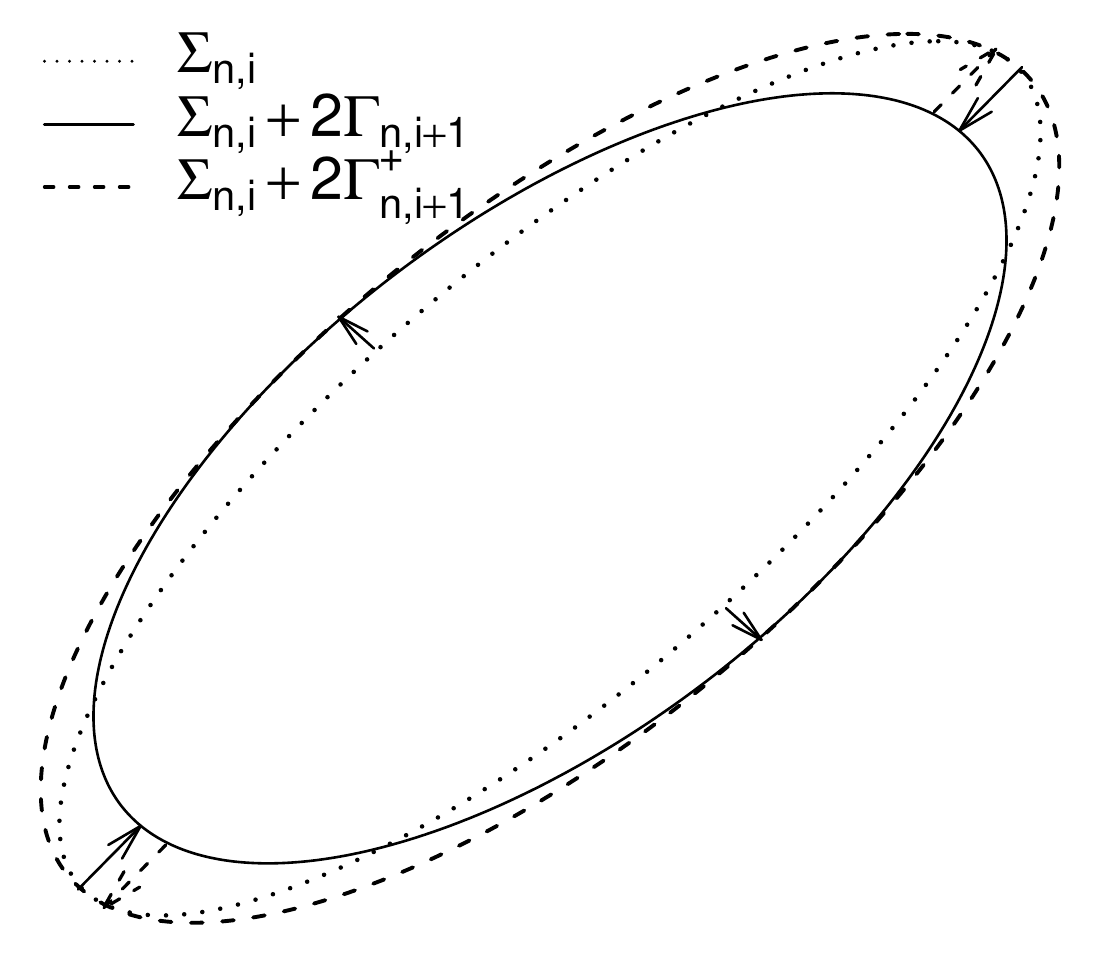}
    \caption{A diagrammatic sketch of the adjustment effect on a
      confidence region. Adding $2\Gamma_{n,i+1}$ squeezes the
      confidence region in the direction of the eigenvector corresponding to the negative eigenvalue. The adjustment cancels the
      shrinkage.}
    \label{fig:adjdemo}
\end{figure}

By construction, the mISadj estimator is positive definite. The modification adds a positive semi-definite matrix to the mIS estimator, which by Theorem \ref{th:mIS} provides a consistent overestimate for the generalized variance, $|\Sigma|$, and therefore the mISadj estimator also has a larger determinant than the asymptotic covariance matrix, $\Sigma$.
\begin{theorem}\label{th:mISadj}
  With probability 1, $\liminf\limits_{n\to\infty}
  |\Sigma_{\mathrm{adj},n}| \geq |\Sigma|$.
\end{theorem}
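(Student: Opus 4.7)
The plan is to reduce Theorem~\ref{th:mISadj} to Theorem~\ref{th:mIS} by showing that the adjustment only enlarges the determinant, so the $\liminf$ bound is inherited from the unadjusted mIS estimator.

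First, I would write the difference explicitly:
\[
\Sigma_{\mathrm{adj},n} - \Sigma_{\mathrm{seq},n} = 2\sum_{i=s_n+1}^{t_n}\bigl(\Gamma_{n,i}^+ - \Gamma_{n,i}\bigr).
\]
For a fixed $i$, using the eigen-decomposition $\Gamma_{n,i}=Q^\top\Lambda Q$ from the construction, we have $\Gamma_{n,i}^+-\Gamma_{n,i} = Q^\top(\Lambda^+-\Lambda)Q$, and $\Lambda^+-\Lambda$ is diagonal with entries $\max\{-\lambda_j,0\}\geq 0$. Hence each summand is positive semi-definite, and so is their sum. Call this sum $B_n$, so that $\Sigma_{\mathrm{adj},n}=\Sigma_{\mathrm{seq},n}+B_n$ with $B_n\succeq 0$.

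Next, by Theorem~\ref{th:feasible}, $s_n$ exists with probability 1 once $n$ is large enough, and by the definition of $t_n\geq s_n$ the matrix $\Sigma_{\mathrm{seq},n}=\Sigma_{n,t_n}$ is positive definite. For positive definite $A$ and positive semi-definite $B$, the standard determinant identity
\[
|A+B| = |A|\,\bigl|I + A^{-1/2}BA^{-1/2}\bigr| \geq |A|
\]
holds, because $A^{-1/2}BA^{-1/2}$ is positive semi-definite and hence all its eigenvalues are non-negative. Applying this with $A=\Sigma_{\mathrm{seq},n}$ and $B=B_n$ gives $|\Sigma_{\mathrm{adj},n}| \geq |\Sigma_{\mathrm{seq},n}|$ for all sufficiently large $n$.

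Taking $\liminf_{n\to\infty}$ on both sides and invoking Theorem~\ref{th:mIS} yields $\liminf_{n\to\infty}|\Sigma_{\mathrm{adj},n}| \geq \liminf_{n\to\infty}|\Sigma_{\mathrm{seq},n}| \geq |\Sigma|$ with probability 1, which is the desired conclusion. There is no serious obstacle here; the entire argument rests on the two easy facts that (a) the ``positive part'' operation adds a positive semi-definite correction and (b) adding a positive semi-definite matrix to a positive definite one does not decrease the determinant. All the probabilistic content has already been absorbed into Theorems~\ref{th:feasible} and \ref{th:mIS}.
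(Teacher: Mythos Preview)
Your argument is essentially the paper's own: the paper does not give a separate appendix proof of Theorem~\ref{th:mISadj} but justifies it in the text by noting that the adjustment adds a positive semi-definite matrix to $\Sigma_{\mathrm{seq},n}$ and then invokes Theorem~\ref{th:mIS}. You have simply made that sketch explicit, computing $\Sigma_{\mathrm{adj},n}-\Sigma_{\mathrm{seq},n}$ and supplying the determinant inequality $|A+B|\ge|A|$ for $A\succ 0$, $B\succeq 0$.

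One small point worth tightening: the claim that ``by the definition of $t_n\ge s_n$ the matrix $\Sigma_{n,t_n}$ is positive definite'' is not immediate from the definitions---the truncation rule guarantees only that $\Sigma_{n,s_n}$ is positive definite and that $|\Sigma_{n,m}|$ is increasing for $s_n\le m\le t_n$, which in principle does not rule out an even number of negative eigenvalues appearing. The paper's informal argument glosses over the same issue. A clean fix is to compare $\Sigma_{\mathrm{adj},n}$ directly with $\Sigma_{n,K}$ for a fixed $K\ge m_0$: on the event $\{s_n=m_0,\ t_n\ge K\}$ (which has probability one eventually by Lemma~\ref{lemma:Indicator0}), one has $\Sigma_{\mathrm{adj},n}-\Sigma_{n,K}=2\sum_{i=m_0+1}^{K}(\Gamma_{n,i}^+-\Gamma_{n,i})+2\sum_{i=K+1}^{t_n}\Gamma_{n,i}^+\succeq 0$, and $\Sigma_{n,K}\to\Sigma_K\succ 0$, so $|\Sigma_{\mathrm{adj},n}|\ge|\Sigma_{n,K}|\to|\Sigma_K|$; letting $K\to\infty$ gives the result.
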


\subsubsection{Related estimators}

The motivation for our approach can be found in \pcite{geye:1992}
univariate initial positive sequence (uIS) estimator.  Suppose $\mu$
is one-dimensional and denote the variance of the asymptotic normal
distribution $\sigma^2$.  In this setting \citet{geye:1992} proposed
the uIS estimator
\[
\sigma^2_{\mathrm{pos},n} = -\gamma_{n,0} + 2 \sum\limits_{i=0}^{t_n}
\Gamma_{n,i},
\]
where $t_n$ is the largest integer $m$ such that $\Gamma_{n,i}>0$ for
all $i \in \{1,\ldots,m\}$. That is, Geyer's truncation rule is to stop adding
in $2\Gamma_{n,i}$ when it causes $\sigma^2_{n,i}=-\gamma_{n,0} + 2(\Gamma_{n,0}+ \cdots
+ \Gamma_{n,i})$ to
decrease. (Figure~\ref{fig:uniseri} depicts the behavior of
$\sigma^2_{n,i}$ and $\Gamma_{n,i}$ for one of the examples we
consider later.) The uIS estimator is therefore the
first local maximum of the sequence
$\{\sigma^2_{n,i}: i=0,\ldots,\lfloor n/2-1\rfloor\}$ and thus gives an asymptotic
overestimate of $\sigma^2$. This is formally stated in his Theorem~3.2:
\[
\liminf_{n\to\infty}\sigma^2_{\mathrm{pos},n}\geq \sigma^2
~~~\text{with probability}~1.
\]

Neither mIS nor mISadj is a straightforward generalization of Geyer's
method in that mIS and mISadj coincide but do not
reduce to uIS when $\mu$ is one-dimensional. However, this is not
essential because the three methods are asymptotically equivalent in univariate settings.

\begin{figure}[t!]
    \centering
    \includegraphics[width=0.5\linewidth]{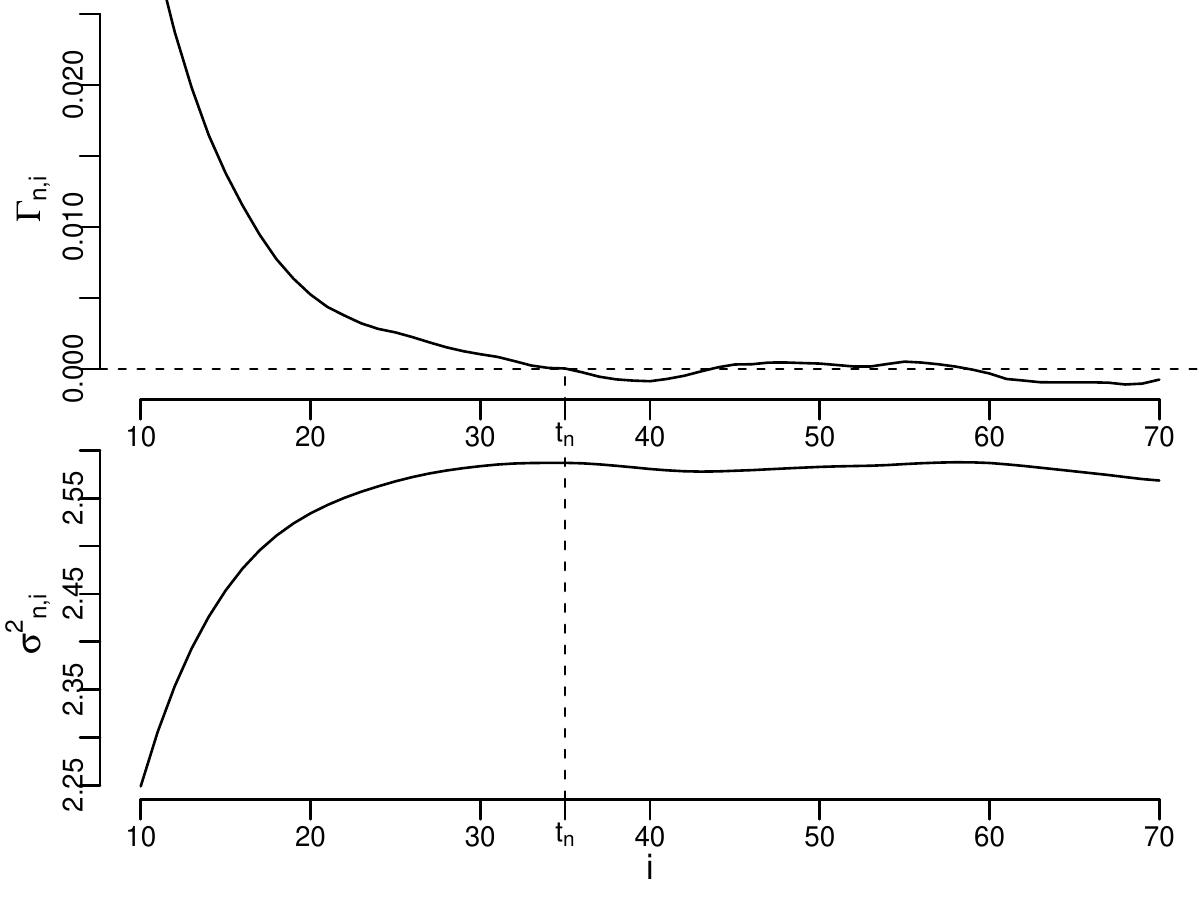}
    \caption{The uIS method truncates the first time $\Gamma_{n,i}$ is non-positive, or equivalently at the first local maximum of $\{\sigma^2_{n,i};i=0,\ldots,\lfloor n/2-1\rfloor\}$. Computed using a marginal chain of the Bayesian logistic regression example described in Section~\ref{sub:eg1} with Monte Carlo sample size $10^6$.}
    \label{fig:uniseri}
\end{figure}

\citet{Kosorok2000} proposed an alternative multivariate estimator (mK)
which was also motivated by \pcite{geye:1992} approach. Recall from
Proposition~\ref{prop:1} that $\{ \xi_i
;i \in \mathbb{N}\}$ is positive, decreasing, and converges to $0$, where $\xi_i$ is the smallest eigenvalue of $\Gamma_i$.  In mK the truncation point is chosen to be the
largest integer $m$ such that $\xi_{n,i} > 0$ for all $i \in \{1, \ldots,
m\}$. However, this does not ensure that the generalized variance is
adequately estimated and often truncates before the sequence
$\{|\Sigma_{n,i}| : i=s_n, \ldots,\lfloor n/2-1\rfloor\}$ reaches the first local
maximum, as demonstrated in Figure~\ref{fig:multiseriK} and \ref{fig:together}.

\begin{figure}[ht]
    \centering
    \includegraphics[width=0.5\linewidth]{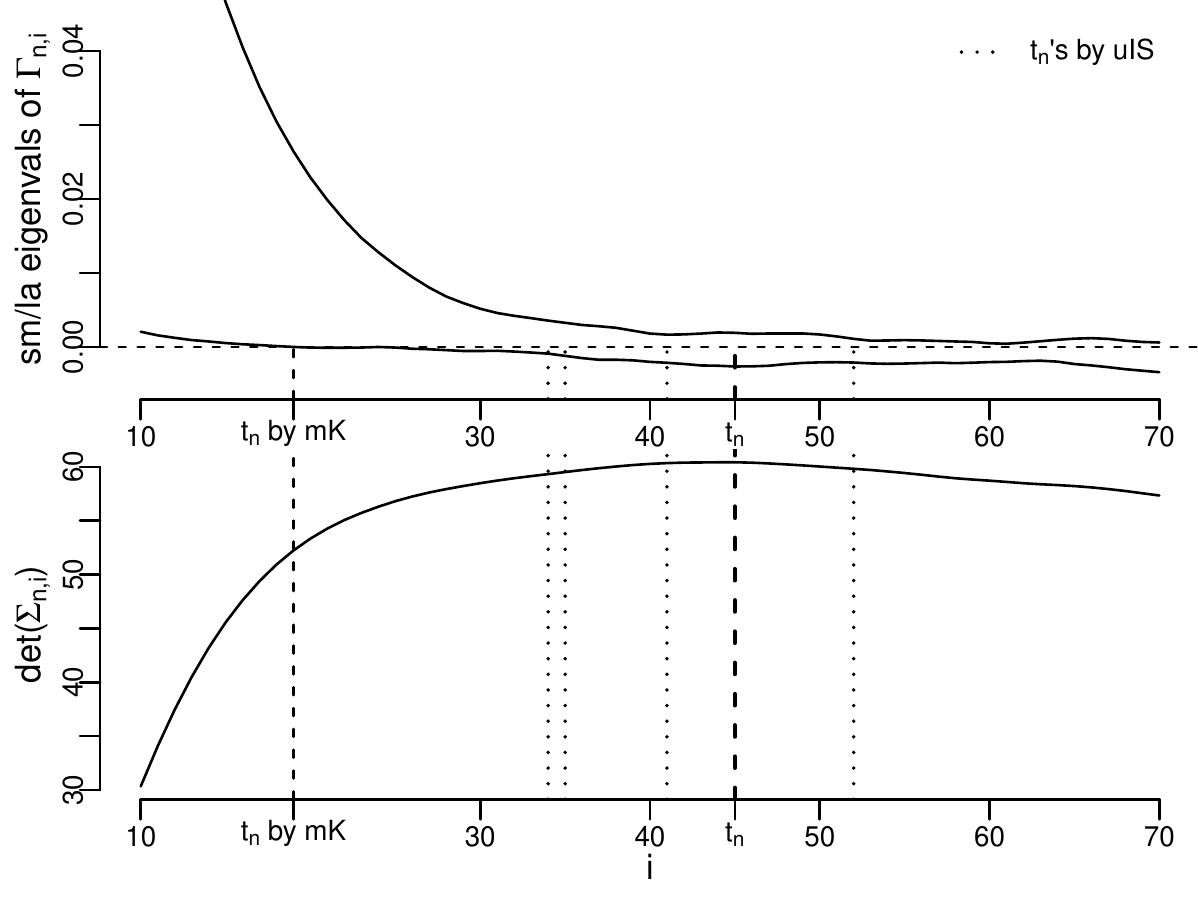}
    \caption{The thick dashed line marks the mIS truncation point. Truncating at the first local maximum of $\{|\Sigma_{n,i}|;i=s_n,\ldots,\lfloor n/2-1\rfloor\}$ achieves a balance between the individual components. The mK truncation point---namely the first time $\{\Gamma_{n,i};i=0,\ldots,\lfloor n/2-1\rfloor\}$ fails to be positive definite---is premature. Computed using the 5-dimensional Bayesian logistic regression example described in Section~\ref{sub:eg1} with Monte Carlo sample size $10^6$.}
    \label{fig:multiseriK}
\end{figure}

\begin{figure}[b!]
    \centering
    \includegraphics[width=0.5\linewidth]{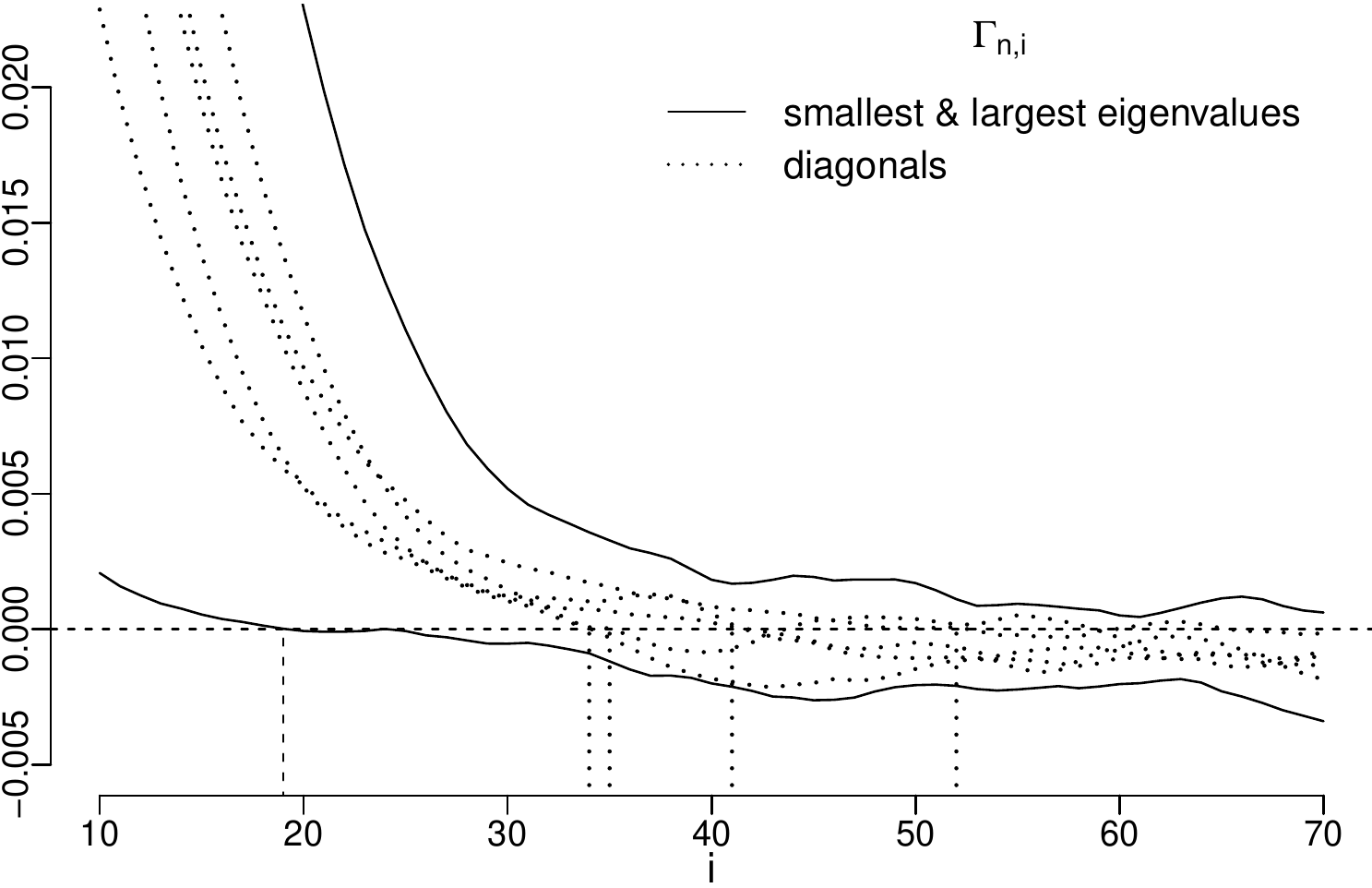}
    \caption{The diagonal entries of $\Gamma_{n,i}$, which correspond to the individual components, are always between the smallest and largest eigenvalues. It is too early to truncate the first time the smallest eigenvalue drops below 0, as is the case with mK (see the vertical dashed line). On the other extreme, it is too late to truncate when the largest eigenvalue drops below 0. The ideal truncation point should be somewhere between the uIS truncation points, marked by the vertical dotted lines. Computed using the 5-dimensional Bayesian logistic regression example described in Section~\ref{sub:eg1} with Monte Carlo sample size $10^6$.}
    \label{fig:together}
\end{figure}

\section{Simulation experiments}
\label{sec:simulation}

Our goal is to investigate the finite-sample properties of mIS,
mISadj, mK, and uIS through simulation experiments in a variety of examples. In
each of the examples, which are described in more detail below, we
compare the approaches in terms of effective sample size as well as
volume and coverage probability of a joint confidence region.

We describe the simulation examples and the MCMC algorithms used in
Section~\ref{sub:eg1}--\ref{sub:AR1}.  The results of the simulation
experiments are given in Section~\ref{sub:results}.  We then consider
a meta-analysis application in Section~\ref{sub:real}.

\subsection{Bayesian logistic regression}
\label{sub:eg1}

For $i \in \{1,\ldots, 100\}$, let $X_{i} = (x_{i1},\ldots, x_{i5})$ be the
observed covariates for the $i$th observation and $Y_i$ be the binary
response. We suppose
\[
Y_i|X_i,\beta\overset{ind}{\sim} \mathrm{Bernoulli}\left\{ \frac{1}{1 +
    \exp(-X_i\beta)} \right\}~~\text{and}~~
\beta\sim \mathcal{N}_5(0,4I_5) .
\]
This model results in a posterior on $\mathbb{R}^5$, denoted $F$.  The
data we use is provided in the \verb+logit+ dataset in the \verb+mcmc+
R package.

We are interested in estimating the posterior mean of $\beta$, i.e.,
$\mu =\mathrm{E}_{F} (\beta)$.  However, this expectation is intractable and
hence we will use a symmetric random walk Metropolis--Hastings
algorithm to estimate it. At each step of the
Markov chain, the proposal for the next step is $\mathcal{N}_5(0,\,
0.3^2 I_5)$.  The standard deviation of $0.3$ ensures that in our
application the acceptance rate is about 0.36.

By construction, the Metropolis--Hastings algorithm satisfies detailed balance \eqref{eq:dbc}. \citet{vats:fleg:jone:moa:2015} established that this Markov chain is
geometrically ergodic and that the posterior has a moment generating
function and hence a CLT as at \eqref{eq:multiCLT}
holds.

\subsection{Bayesian one-way random effects model}
\label{sub:eg2}

Suppose for $i \in \{1,\ldots,K\}$,
\[
Y_i \mid \theta_i,\gamma_i\overset{ind.}{\sim} \mathcal{N}(\theta_i,\gamma_i^{-1}),
\quad 
\theta_i \mid \mu,\lambda_\theta,\lambda_i \overset{ind.}{\sim}
\mathcal{N}(\mu,\lambda_\theta^{-1} \lambda_i^{-1}),
\]
$$
\mu \sim \mathcal{N}(m_0,v_0^{-1}), \quad \gamma_i \overset{iid}{\sim} \mathcal{G}(a_3,b_3),\quad
\lambda_\theta \sim \mathcal{G}(a_1,b_1), \quad \lambda_i
\overset{iid}{\sim} \mathcal{G}(a_2,b_2),
$$
where we assume the $a_1,a_2,a_3,b_1,b_2,b_3$ and $v_0$ are known
positive constants while $m_0$ is a known scalar. We consider a data
set simulated under the settings $K=2$, $a_1=a_2=b_1=b_2=0.1$, $a_3=b_3=1.5$,
$m_0=0$ and $v_0=0.001$. Let $y$ denote all of the data, $\lambda=(\lambda_1,\ldots,
\lambda_K)^\top$, $\xi=(\theta_1,\ldots,\theta_K, \mu)^\top$, and
$\gamma=(\gamma_1,\ldots, \gamma_K)^\top$. The hierarchy results in a
proper posterior density $f(\xi, \lambda_\theta, \lambda, \gamma \mid y)$
on $\mathbb{R}^{K+1} \times \mathbb{R}^{2K+1}_{+}$. One can verify that the posterior distribution has a finite second moment.

The posterior is intractable in the sense that posterior expectations
are not generally available in closed form.  We will use a random scan
Gibbs sampler having the posterior as its invariant distribution to
estimate the posterior expectation of all parameters.
\citet{DossHobert2010} derived the full conditional densities
$f(\lambda_\theta \mid \xi,\lambda,\gamma)$,
$f(\lambda \mid \xi,\lambda_\theta,\gamma)$,
$f(\gamma \mid \xi,\lambda_\theta,\lambda)$, and
$f(\xi \mid \lambda_\theta,\lambda,\gamma)$ required to implement random
scan Gibbs.

It is well known that the random scan Gibbs sampler kernel is reversible, namely, satisfies detailed balance \eqref{eq:dbc}, with respect to the posterior; see e.g.,
\citet{robe:rose:2004}. \citet{JohnsonJones2014} established geometric ergodicity of
the random scan Gibbs sampler when $2a_1+K-2>0$ and $a_3>1$. These conditions combined with the second moment condition establish a Markov chain CLT.

\subsection{Multivariate AR(1) process}
\label{sub:AR1}
Consider an AR(1) process $\{X_n;n\in \mathbb{N}\}$ taking values in $\mathbb{R}^p$, i.e.,
$X_{n+1}=AX_n+U_{n+1}$, where $U_n$'s are iid $\mathbb{R}^p$-valued random variables and $A$ is a $p\times p$ matrix.

\citet{Osawa1988} proved that when $U_n$'s follow a normal distribution
$\mathcal{N}_p(\theta,V)$, then this $\mathbb{R}^p$-valued AR(1)
process satisfies detailed balance \eqref{eq:dbc} if and only if the
matrix $AV$ is symmetric. Suppose further that
$\lim_{n\rightarrow\infty}A^n=0$, then it has the stationary
distribution $\mathcal{N}_p [(I-A)^{-1}\theta,(I-A^2)^{-1}V]$. It is
easy to verify that the second moment is finite.

Under stationarity one can derive the lag $t$ autocovariance, $\gamma_t=A^{2t}(I-A^2)^{-1}V$, and hence the covariance matrix, $\Sigma=\{2(I-A^2)^{-1}-I\}(I-A^2)^{-1}V$, as in \eqref{eq:Sigma_inf}. Noticing that $\Sigma$ is finite, and that the Markov chain is reversible with a finite second moment, we establish a Markov chain CLT \eqref{eq:multiCLT} with mean $\mu=(I-A)^{-1}\theta$ and covariance matrix $\Sigma$ \cite[][Corollary 6]{HaggstromRosenthal2007}. Also notice that $\Sigma_0=\gamma_0+2\gamma_1$ is always positive definite, which satisfies the assumption in Remark~\ref{rm:asp} and hence guarantees the asymptotic properties of our proposed estimation method.

Let us consider the following choices that satisfy the conditions above: $\theta=\textbf{1}_p$, $V=I_p$, and $A=p^{-1}H_p\mathrm{diag}(2^{-1}$, $\ldots,2^{-p})H_p^\top$, where $H_p$ is a Hadamard matrix of order $p$. We set $p=12$ in our simulation study.
\subsection{Results}
\label{sub:results}

In this section we refer to the setting of Section~\ref{sub:eg1} as
Example~1, the setting of Section~\ref{sub:eg2} as Example~2, and the setting of Section~\ref{sub:AR1} as Example~3. For all examples we ran 2000 independent replications of the Markov chain
for $10^6$ iterations in Examples~1 and 3 and $5\times10^5$ iterations in
Example~2, respectively. We will compare the multivariate
methods---namely mIS, mISadj, and mK---in the context of estimating the
effective sample size. We then turn our attention to the
finite-sample properties of the confidence regions produced by the
multivariate methods, yielding ellipsoidal regions, and Geyer's
univariate uIS for individual components, yielding cube-shaped
regions. To assess coverage probabilities in Examples~1 and 2 we perform an
independent run of length $10^{10}$ of the Markov chain in each
example and declared the sample average over those $10^{10}$
iterations to be the truth, while in Example~3, the true mean is obtained through the closed form expression derived.

\begin{table}[t!]
  \caption {Estimated ESS with standard
    errors. For uIS, only the minimum estimated ESS is reported.}
\label{tab:mESS}
\centering
\begin{tabular}{ l | c c c | c}
\hline 
  & mK & mIS & mISadj & uIS\\ 
\hline
Ex1($\times10^4$) & 5.40 \tiny{(.002)}  & 5.22 \tiny{(.001)} & 5.18 \tiny{(.001)} & 3.95 {\tiny(.002)}\\
Ex2($\times10^4$) & 4.74 {\tiny(.007)} & 3.76 \tiny{(.002)} & 3.52 \tiny{(.003)} & 1.30 {\tiny(.001)}\\
Ex3($\times10^5$) & 8.78 {\tiny(.000)} & 8.39 \tiny{(.000)} & 8.30 \tiny{(.001)} & 7.58 {\tiny(.001)}\\
\hline  
\end{tabular}
\end{table}

The results concerning effective sample size of the simulation
experiments are given in Table~\ref{tab:mESS}. Prior to the work of \citet{vats:fleg:jone:moa:2015} it
was standard to report the minimum of the univariate effective sample
size calculated component-wise. This leads to a substantial
underestimate of the effective sample size as can be seen in
Table~\ref{tab:mESS}. In contrast, multivariate error estimation yields more accurate evaluation of the effective sample size. We can approximately
order the multivariate methods in terms of estimated effective sample
size: mK $>$ mIS $>$ mISadj. That is, mK is more
optimistic than mIS and mISadj. 

We construct $90\%$ confidence regions using the multivariate
estimation methods and uIS. Throughout ``uIS" and ``uIS-Bonferroni"
represent the uncorrected and Bonferroni corrected confidence regions
generated by uIS, respectively. Let us first examine the volumes of
the confidence regions generated by different methods.

\begin{table}[h!]
\caption {Average volumes to the $p$th ($p=5,8,12$ for Ex1, 2, 3) root and standard errors of nominal $90\%$ confidence regions.}
\label{tab:volume}
\centering
\begin{tabular}{ l | c | c c c | c}
\hline 
  & uIS & mK & mIS & mISadj & uIS-Bonferroni\\ 
\hline
Ex1($\times10^{-3}$) & 5.53 \tiny{(.001)}& 6.31 \tiny{(.001)} & 6.41 \tiny{(.001)} & 6.44 \tiny{(.001)} & 7.82 \tiny{(.001)} \\
Ex2($\times10^{-2}$) & 3.51 \tiny{(.002)} & 3.95 \tiny{(.003)} & 4.43 \tiny{(.003)} & 4.58 \tiny{(.003)} & 5.33 \tiny{(.004)}\\
Ex3($\times10^{-3}$) & 3.84 \tiny{(.000)} & 4.78 \tiny{(.000)} & 4.89 \tiny{(.000)} & 4.92 \tiny{(.000)} & 6.16 \tiny{(.000)}\\
\hline  
\end{tabular}
\end{table}

The volumes are presented in ascending order from left to right across
Table~\ref{tab:volume}. The uncorrected uIS confidence regions are
much smaller than the other methods, while the Bonferroni correction
considerably enlarges the confidence regions, resulting in bigger
volumes than all the multivariate methods.

Recall that the volume of a confidence region depends on the estimated covariance
matrix only through the estimated generalized variance of the Monte Carlo error. Therefore, Table~\ref{tab:volume} compares the estimation of the generalized variance by different multivariate methods. We observe that mK underestimates the generalized variance relatively to mIS. The mISadj method is comparable to mIS in Examples~1 and 3 but clearly overestimates in Example~2.

\begin{table}[ht]
  \caption {Estimated coverage probabilities and standard errors of
    nominal $90\%$ confidence regions.}
\label{tab:coverage}
\centering
\begin{tabular}{ l | c | c c c | c}
\hline 
  & uIS & mK & mIS & mISadj & uIS-Bonferroni\\ 
\hline
Ex1 & .622 \tiny{(.0108)} & .885 \tiny{(.0071)} & .898 \tiny{(.0068)} & .900 \tiny{(.0067)} & .908 \tiny{(.0065)}\\
Ex2 & .386 \tiny{(.0109)} & .660 \tiny{(.0106)} & .845 \tiny{(.0081)} & .881 \tiny{(.0073)} & .862 \tiny{(.0077)}\\
Ex3 & .323 \tiny{(.0105)} & .882 \tiny{(.0072)} & .911 \tiny{(.0064)} & .916 \tiny{(.0062)} & .917 \tiny{(.0062)}\\
\hline  
\end{tabular}
\end{table}

Table~\ref{tab:coverage} shows the empirical coverage probabilities of the confidence regions produced by different methods. The proposed method, mIS, exceeds mK in both the volume and the coverage of confidence regions, although the coverage rate does not always reach the expectation. The adjustment moderately increases the coverage probability.

The uncorrected uIS regions have a poor coverage. The Bonferroni
regions work well in these examples, but in high-dimensional cases the
Bonferroni correction can be overly conservative. Overall,
multivariate error estimation methods yield better confidence regions.

\subsection{A meta-analysis example}
\label{sub:real}

\citet{DossHobert2010} carried out meta-analyses in order to study the effect of non-steroidal anti-inflammatory drugs (NSAIDs) on the risk of colon cancer. The dataset consists of 21 studies that relate NSAIDs intake and risk of colon cancer; see \citet{Harris2005} and \citet{DossHobert2010} for details. We apply the Bayesian one-way random effects model described in Section~\ref{sub:eg2} to the colon cancer dataset. The posterior $f(\theta_1,\ldots,\theta_K,\mu, \lambda_\theta, \lambda_1,\ldots,\lambda_K, \gamma_1,\ldots,\gamma_K \mid y)$ has dimension $p=65$ when $K=21$.

We run a Markov chain for $4\times 10^6$ iterations and compute the multivariate
estimators---namely mIS, mISadj, and mK---along with Geyer's uIS for individual components.

The estimated generalized variances are reported in Table~\ref{tab:realGvar}. The result agrees with our conclusion from the previous simulation study: mISadj is more conservative than mIS; mK clearly underestimates the generalized variance.

\begin{table}[h!]
  \caption {Estimated asymptotic generalized variances ($\times 10^{79}$) of the Monte
    Carlo errors using the colon cancer dataset.}
\label{tab:realGvar}
\centering
\begin{tabular}{c c c}
\hline 
 mK & mIS & mISadj\\ 
\hline
.044 & 6.285 & 77.144 \\
\hline  
\end{tabular}
\end{table}

Table~\ref{tab:realESS} shows the estimated effective sample sizes. The uIS method results in 65 estimated effective sample sizes, each of which corresponds to a component of the posterior distribution. Only the minimum estimated univariate effective sample size is reported.

\begin{table}[h!]
  \caption {Estimated ESS ($\times10^5$) with Monte Carlo sample size $4\times10^6$ using the colon cancer dataset. For uIS, only the minimum estimated ESS is reported.}
\label{tab:realESS}
\centering
\begin{tabular}{c c c| c}
\hline 
 mK & mIS & mISadj & uIS\\ 
\hline
4.637 & 4.296 & 4.134 & 1.137\\
\hline  
\end{tabular}
\end{table}

An advantage of using multivariate methods like mIS over univariate estimation like uIS is that only multivariate methods capture the cross-correlation between components. This cross-correlation is often significant as seen in Figure~\ref{fig:ccf}. 

\begin{figure}[t!]
    \centering
    \includegraphics[width=0.4\linewidth]{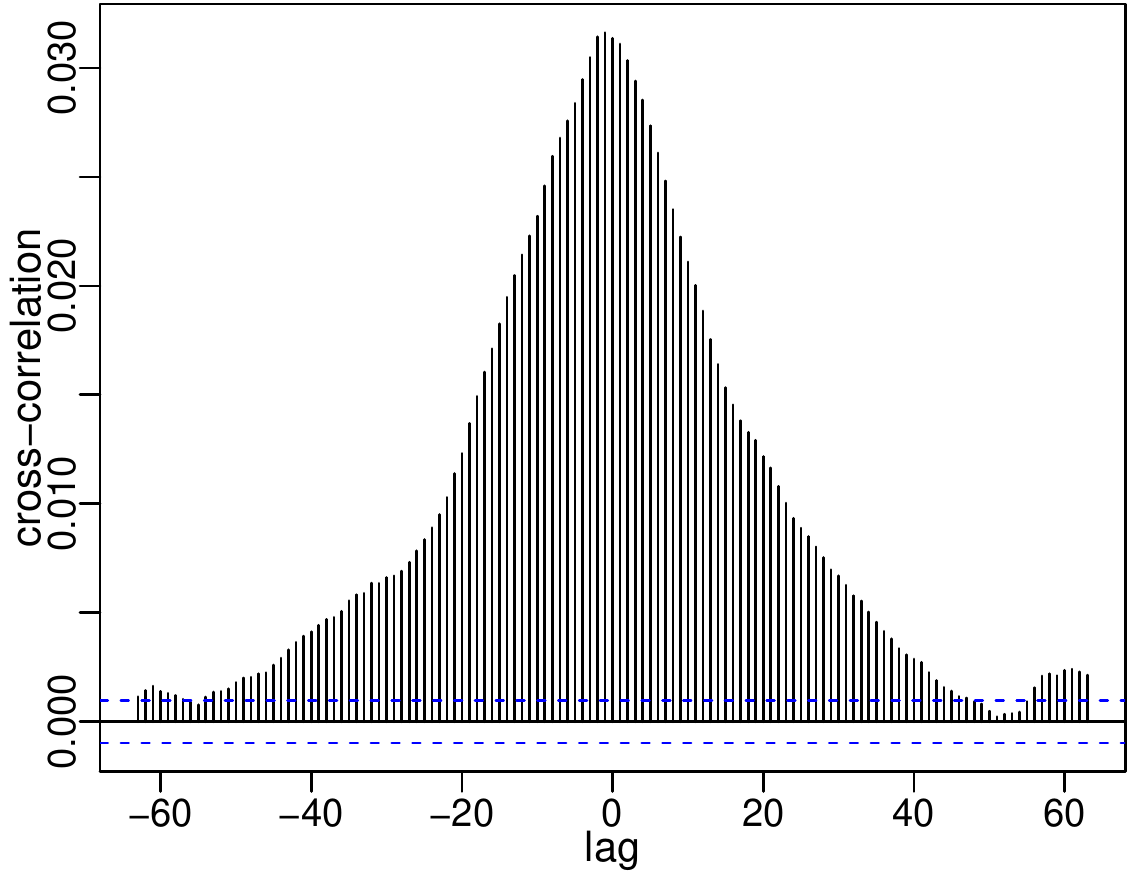}
    \caption{Cross-correlation plot between $\mu$ and $\lambda_1$. Computed with Monte Carlo sample size $4\times10^6$ using the colon cancer dataset.}
    \label{fig:ccf}
\end{figure}

We construct $90\%$ confidence regions using the multivariate
estimation methods and uIS. The left panel of Figure~\ref{fig:realCR}
shows the cross-sections of the confidence regions that are cut
through the center of the confidence regions parallel to the plane
spanned by $\mu$ and $\lambda_1$.  The reader should not be worried
that the cross-sectioned ellipsoids appear much larger than the
Bonferroni region.  

The full 65-dimensional ellipsoid will have a
smaller volume than the 65-dimensional Bonferroni region, but this
does not have to be the case for cross-sectioned regions.  As a
comparison, in the right panel of Figure~\ref{fig:realCR} we present
bivariate $90\%$ confidence regions for $\mu$ and $\lambda_1$ when we
ignore the other 63 components. This clearly shows how multivariate
estimation methods generate confidence regions that are not so liberal
as uIS, yet not so conservative as uIS-Bonferroni.

\begin{figure}[ht!]
\begin{center}
  \begin{subfigure}[ht!]{.4\textwidth}
        \centering
        \includegraphics[width=\linewidth]{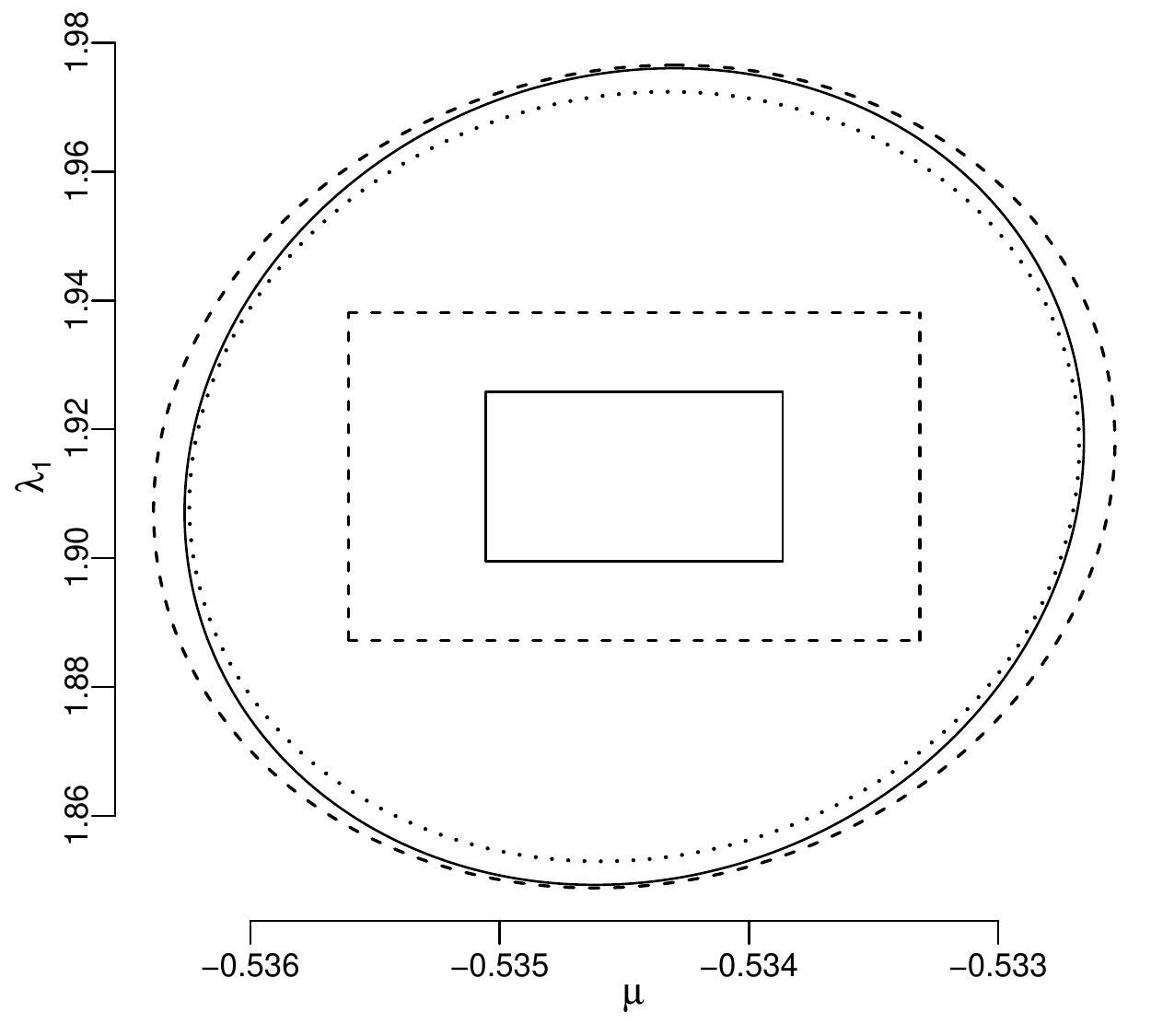}
        \label{fig:rCRcross}
    \end{subfigure}%
    ~ 
    \begin{subfigure}[ht!]{.4\textwidth}
        \centering
        \includegraphics[width=\linewidth]{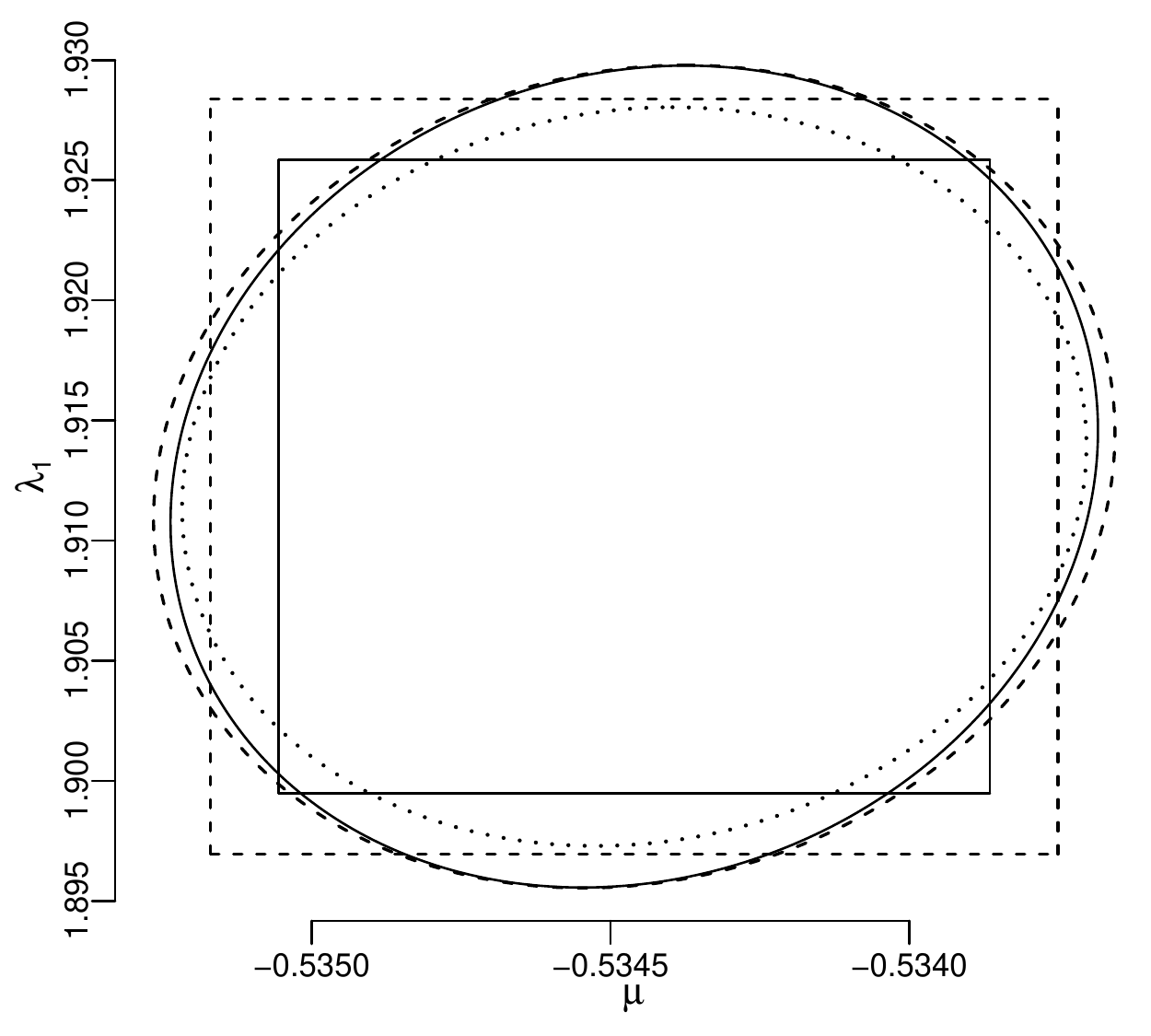}
        \label{fig:realCRsub}
    \end{subfigure}
    \end{center}
    \caption{$90\%$ confidence regions for $(\mu, \,\lambda_1)$
      computed with Monte Carlo sample size $4\times10^6$ using the
      colon cancer dataset. The left panel displays confidence regions
      based on a cross-section of the $p=65$-dimensional region
      parallel to the plane spanned by $(\mu, \,\lambda_1)$. In the
      right panel the confidence regions are created by ignoring the
      other 63 components.  In both panels, the solid ellipsoid,
      dashed ellipsoid, and dotted ellipsoid corresponds to mIS,
      mISadj, and mK, respectively while the small solid and the big
      dashed rectangles are uIS and uIS-Bonferroni, respectively.  }
    \label{fig:realCR}
\end{figure}

\begin{table}[ht!]
\caption {Volumes to the $p$th root ($\times 10^{-3}$) of $90\%$ confidence regions for all components ($p=65$) and for $\mu$ and $\lambda_1$ only ($p=2$). Computed with Monte Carlo sample size $4\times10^6$ using the colon cancer dataset.}
\label{tab:realvolume}
\centering
    \begin{subtable}{.5\textwidth}
        \centering
        \begin{tabular}{ c | c c c | c}
        \hline 
        uIS & mK & mIS & mISadj & uIS-Bonferroni\\ 
        \hline
        6.96 & 8.70 & 9.04 & 9.22 & 13.39\\
        \hline  
        \end{tabular}
    \caption{65-dimensional confidence regions.}\label{tab:realvolume65}
    \end{subtable}
   ~
    \begin{subtable}{.5\textwidth}
            \centering
        \begin{tabular}{ c | c c c | c}
        \hline 
        uIS & mK & mIS & mISadj & uIS-Bonferroni\\ 
        \hline
        5.60 & 6.04 & 6.44 & 6.56 & 6.68\\
        \hline  
        \end{tabular}
    \caption{Bivariate confidence regions for $\mu$ and $\lambda_1$.}\label{tab:realvolume2}
    \end{subtable}
 
\end{table}

Table~\ref{tab:realvolume} compares the volumes of the confidence
regions generated by different methods. The results agree with our
conclusion from the previous simulation study: mISadj is slightly more
conservative than mIS; mK clearly underestimates the generalized
variance. The volumes generated by multivariate estimators are fairly
close to each other but the univariate results are far
away. Apparently uIS is too liberal while uIS-Bonferroni is too
conservative, but the multivariate methods achieve a balance.

\subsection{Discussion}
\label{sub:discussion}
The preceding simulation examples and the theory developed indicate
that mIS and mISadj perform as they were designed to in that they
provide a consistent overestimate of the asymptotic generalized
variance of the Monte Carlo error. Compared to standard univariate
methods, our estimators adjust for multivariate issues and thus provide
more realistic estimates of Monte Carlo effective sample size and
slightly larger confidence regions which result in improved
performance in terms of coverage probabilities.

\bigskip
\noindent
\textbf{Acknowledgments}. The authors are grateful to Charles Geyer and Dootika Vats for helpful
conversations.

\appendix
\section*{Appendices}
\addcontentsline{toc}{section}{Appendices}
\renewcommand{\thesubsection}{\Alph{subsection}}

\subsection{Proofs of Propositions 1 and 2}
\label{app:propositions}
We begin with some preliminary results which will be useful later.

\begin{lemma} [\citet{Harville1997}, Lemma 18.2.17]\label{lemma:harville}
Let $A_0,A_1,A_2,\ldots$ represent a sequence of $m\times n$ matrices. If the infinite series $\sum_{k=0}^\infty A_k$ converges, then $\lim_{k\to\infty}A_k=0$.
\end{lemma}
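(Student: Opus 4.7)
The plan is to reduce the matrix statement to the familiar scalar fact that convergence of a series forces the terms to vanish. Convergence of the matrix series $\sum_{k=0}^\infty A_k$ is understood in the usual sense: the sequence of partial sums $S_N = \sum_{k=0}^N A_k$ converges to some $m\times n$ matrix $S$ as $N\to\infty$. Since $\real^{m\times n}$ is finite-dimensional, all norms on it are equivalent, and convergence in norm is equivalent to entry-wise convergence; I would make this explicit at the outset so that the mode of convergence is unambiguous. The central observation is then the telescoping identity
\[
A_{N+1} = S_{N+1} - S_N, \qquad N \ge 0.
\]

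Taking $N\to\infty$ in this identity and using continuity of matrix subtraction (which is trivial entry-wise) yields
\[
\lim_{N\to\infty} A_{N+1} = \lim_{N\to\infty} S_{N+1} - \lim_{N\to\infty} S_N = S - S = 0,
\]
and reindexing gives $\lim_{k\to\infty} A_k = 0$. An equivalent formulation, which I would mention as an alternative, is purely entry-wise: for each pair $(i,j)$, the scalar series $\sum_{k=0}^\infty (A_k)_{ij}$ converges, so by the classical scalar result $(A_k)_{ij} \to 0$; since there are only finitely many entries, $A_k \to 0$ in any (hence every) matrix norm. The main obstacle, such as it is, is purely bookkeeping: fixing what ``convergence'' of a matrix series means and noting that in finite dimensions this choice is unique up to equivalence. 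Once that is pinned down, the proof is the one-line telescoping calculation above.
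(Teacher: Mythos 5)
Your proof is correct: the telescoping identity $A_{N+1}=S_{N+1}-S_N$ together with the convergence of the partial sums immediately gives $A_k\to 0$, and your remarks on entry-wise convergence and norm equivalence in finite dimensions properly pin down what convergence of a matrix series means. The paper does not prove this lemma itself but cites it from Harville's \emph{Matrix Algebra from a Statistician's Perspective} (Lemma 18.2.17), where the argument is exactly this standard one, so your proposal supplies the intended proof.
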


Since the eigenvalues of a Hermitian $p\times p$ matrix $A$ are real,
we may (and do) adopt the convention that they are always arranged in
algebraically non-decreasing order:
\begin{equation}\label{eq:eigen_order}
\lambda_{\mathrm{min}}(A)=\lambda^{(1)}(A)\leq
\lambda^{(2)}(A)\leq\cdots\leq\lambda^{(p-1)}(A)\leq\lambda^{(p)}(A) =
\lambda_{\mathrm{max}}(A). \tag{A.1}
\end{equation}

\begin{lemma} [\citet{HornJohnson1985}, Corollary 4.3.15]
  Let $p\times p$ matrices $A$, $B$ be Hermitian and let the
  respective eigenvalues of $A$, $B$, and $A+B$ be
  $\{\lambda^{(k)}(A): k=1,\ldots,p\}$, $\{\lambda^{(k)}(B): k=1,\ldots,p\}$, and
  $\{\lambda^{(k)}(A+B): k=1,\ldots,p\}$, each algebraically ordered as in
  \eqref{eq:eigen_order}. Then, for all $k\in \{1,\ldots,p\}$,
\begin{equation}\label{eq:HornJohnson}
\lambda^{(k)}(A)+\lambda^{(1)}(B)\leq\lambda^{(k)}(A+B).
\tag{A.2}
\end{equation}
\end{lemma}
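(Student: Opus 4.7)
The statement is Weyl's inequality, so the natural route is the Courant--Fischer max--min characterization of eigenvalues of a Hermitian matrix. Recall that for a $p \times p$ Hermitian matrix $M$ with eigenvalues ordered as in \eqref{eq:eigen_order},
\[
\lambda^{(k)}(M) \;=\; \max_{\substack{S\subseteq\mathbb{C}^p\\ \dim S = k}}\;\min_{\substack{x\in S\\ \|x\|=1}} x^{*} M x.
\]
I would take this variational formula as the starting point (either proving it or citing it as a standard fact).

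Step one: for any unit vector $x \in \mathbb{C}^p$, the Rayleigh quotient bound $x^{*} B x \geq \lambda^{(1)}(B)$ holds, since $\lambda^{(1)}(B)$ is the minimum of $x^{*} B x$ over unit vectors. Hence, for every subspace $S \subseteq \mathbb{C}^p$ of dimension $k$ and every unit $x \in S$,
\[
x^{*}(A+B)x \;=\; x^{*} A x + x^{*} B x \;\geq\; x^{*} A x + \lambda^{(1)}(B).
\]

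Step two: take the minimum over unit vectors $x \in S$ on both sides, which preserves the inequality since $\lambda^{(1)}(B)$ is a constant independent of $x$:
\[
\min_{\substack{x\in S\\ \|x\|=1}} x^{*}(A+B)x \;\geq\; \min_{\substack{x\in S\\ \|x\|=1}} x^{*} A x \;+\; \lambda^{(1)}(B).
\]

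Step three: take the maximum over all $k$-dimensional subspaces $S \subseteq \mathbb{C}^p$. Applying the Courant--Fischer identity to both $A+B$ (on the left) and to $A$ (on the right) yields
\[
\lambda^{(k)}(A+B) \;\geq\; \lambda^{(k)}(A) + \lambda^{(1)}(B),
\]
which is exactly \eqref{eq:HornJohnson}.

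The argument is essentially a two-line max--min manipulation once Courant--Fischer is in hand, so there is no serious obstacle. The only place where care is needed is to use the \emph{max--min} (not min--max) form of the variational principle matched to the convention \eqref{eq:eigen_order} that eigenvalues are ordered in non-decreasing order; mixing the two conventions reverses inequalities. If one preferred to avoid citing Courant--Fischer, a self-contained proof could instead exhibit the $k$-dimensional subspace spanned by eigenvectors of $A$ corresponding to $\lambda^{(1)}(A),\ldots,\lambda^{(k)}(A)$ and apply a dimension-counting argument with the span of eigenvectors of $A+B$ corresponding to $\lambda^{(k)}(A+B),\ldots,\lambda^{(p)}(A+B)$, but the Courant--Fischer route is cleaner.
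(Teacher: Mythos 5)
The paper does not actually prove this lemma: it is imported verbatim from Horn and Johnson (Corollary 4.3.15), i.e., Weyl's inequality, and used as a cited fact. Your Courant--Fischer argument is the standard textbook proof of that result (essentially the one in Horn and Johnson for Weyl's theorem, of which the cited corollary is the special case pairing $\lambda^{(k)}(A)$ with $\lambda^{(1)}(B)$), so you have supplied a self-contained proof where the paper supplies only a reference, and the logic of Steps 1--3 is sound. One correction, on exactly the point you yourself flag as needing care: under the paper's convention \eqref{eq:eigen_order} (eigenvalues in \emph{non-decreasing} order), the max--min characterization reads
\[
\lambda^{(k)}(M)=\max_{\substack{S\subseteq\mathbb{C}^p\\ \dim S=p-k+1}}\ \min_{\substack{x\in S\\ \|x\|=1}}x^{*}Mx ,
\]
not the version with $\dim S=k$ that you display, which under this convention returns the $k$th \emph{largest} eigenvalue (take $k=1$: maximizing over one-dimensional subspaces gives $\lambda_{\max}$, not $\lambda^{(1)}$). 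Indeed your Step 1 correctly treats $\lambda^{(1)}(B)$ as the minimum of the Rayleigh quotient, which contradicts your own displayed formula. The slip is harmless to the conclusion because the mis-indexed characterization is applied identically to $A$ and to $A+B$, so the resulting family of inequalities over $k\in\{1,\ldots,p\}$ coincides with \eqref{eq:HornJohnson}; but the displayed formula should be corrected to $\dim S=p-k+1$ (or the ordering convention reversed consistently throughout).
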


\begin{lemma}\label{lemma:matsorting}
  Suppose we have two $p\times p$ Hermitian matrices $A$ and $B$. Let
  the respective eigenvalues of $A$ and $B$ be
  $\{\lambda^{(k)}(A): k=1,\ldots,p\}$ and $\{\lambda^{(k)}(B): k=1,\ldots,p\}$, each
  algebraically ordered as in \eqref{eq:eigen_order}. If $A-B$ is
  positive definite, then $\lambda^{(k)}(A)>\lambda^{(k)}(B)$, for all
  $k \in \{1,\ldots,p\}$. Further, if $A$ and $B$ are both positive
  semi-definite, then $|A|>|B|$.
\end{lemma}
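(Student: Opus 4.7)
The plan is to derive both parts directly from the Horn--Johnson inequality \eqref{eq:HornJohnson} stated in the previous lemma.

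For part one, I would write $A = B + (A-B)$ and set $C = A - B$, which by hypothesis is positive definite and hence Hermitian with strictly positive smallest eigenvalue $\lambda^{(1)}(C) > 0$. Applying \eqref{eq:HornJohnson} with the roles of ``$A$'' and ``$B$'' in that inequality played by $B$ and $C$ respectively yields
\[
\lambda^{(k)}(B) + \lambda^{(1)}(C) \leq \lambda^{(k)}(B + C) = \lambda^{(k)}(A)
\]
for each $k \in \{1,\ldots,p\}$. Since $\lambda^{(1)}(C) > 0$, this gives the strict inequality $\lambda^{(k)}(A) > \lambda^{(k)}(B)$.

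For part two, I would use part one together with the positive semi-definiteness of $B$ to conclude $\lambda^{(k)}(A) > \lambda^{(k)}(B) \geq 0$ for every $k$. In particular $\lambda^{(1)}(A) > 0$, so all eigenvalues of $A$ are strictly positive. Since the determinant of a Hermitian matrix equals the product of its eigenvalues, I split into two cases. If some $\lambda^{(k)}(B) = 0$, then $|B| = 0 < |A|$, because $|A|$ is a product of strictly positive numbers. Otherwise all $\lambda^{(k)}(B)$ are strictly positive, and then the strict factor-wise inequality $\lambda^{(k)}(A) > \lambda^{(k)}(B) > 0$ propagates to the product $|A| = \prod_k \lambda^{(k)}(A) > \prod_k \lambda^{(k)}(B) = |B|$ (formally, by an easy induction replacing one factor at a time while all quantities remain positive).

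I do not anticipate a serious obstacle: the main content of part one is really just an application of \eqref{eq:HornJohnson}. The only mildly delicate point is ensuring strictness in part two when $B$ is merely positive semi-definite rather than positive definite, which is why I would separate the $|B| = 0$ case explicitly rather than relying on a single multiplicative argument.
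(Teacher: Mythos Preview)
Your proof is correct and follows essentially the same route as the paper: both apply \eqref{eq:HornJohnson} to $B$ and $A-B$ for part one, then use $\lambda^{(k)}(A)>\lambda^{(k)}(B)\geq 0$ and the product-of-eigenvalues formula for part two. Your explicit case split for $|B|=0$ is a nice touch that makes the strictness argument cleaner than the paper's version, but the substance is identical.
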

\begin{proof}
Applying \eqref{eq:HornJohnson} to $B$ and $A-B$, we get for all $k \in \{1,\ldots,p\}$,
$
\lambda^{(k)}(B)+\lambda^{(1)}(A-B)\leq\lambda^{(k)}(A).
$
Since $A-B$ is positive definite, we have $\lambda^{(1)}(A-B)>0$.
Therefore, for all $k \in \{ 1,\ldots,p\}$, $\lambda^{(k)}(A)>\lambda^{(k)}(B)$.
When $A$ and $B$ are both positive semi-definite, we further have for all $k \in \{1,\ldots,p\}$,
\begin{equation}\label{eq:ineq_eigen}
\lambda^{(k)}(A)>\lambda^{(k)}(B)\geq0. \tag{A.3}
\end{equation}
Since the determinant is equal to the product of all eigenvalues, we take product of \eqref{eq:ineq_eigen} for all $k \in \{ 1, \ldots, p\}$ and obtain $|A|>|B|$.
\end{proof}

\begin{lemma}[\citet{vats:fleg:jone:sve:2016}, Theorem 2]\label{lemma:sm_eigen}
Let $\Sigma_n$ be a strongly consistent estimator of $\Sigma$. Let
  the respective eigenvalues of $\Sigma_n$ and $\Sigma$ be
  $\{\lambda^{(k)}(\Sigma_n): k=1,\ldots,p\}$ and $\{\lambda^{(k)}(\Sigma) : k=1,\ldots,p\}$, each
  algebraically ordered as in \eqref{eq:eigen_order}. Then $\lambda^{(k)}(\Sigma_n)\to\lambda^{(k)}(\Sigma) ~ \text{with probability}~1$ as $n\to\infty$ for all $k \in \{ 1,\ldots,p\}$.
\end{lemma}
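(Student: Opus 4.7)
The lemma asserts that strong consistency of $\Sigma_n$ for $\Sigma$ (both symmetric, as covariance matrices) implies almost-sure convergence of each ordered eigenvalue. The plan is to reduce this to a deterministic continuity statement about eigenvalues of Hermitian matrices and then pass through the almost-sure convergence of the entries.

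First I would fix a version of strong consistency: since $\Sigma_n \to \Sigma$ entrywise with probability 1 and we are in finite dimension $p$, all matrix norms are equivalent, so in particular $\|\Sigma_n - \Sigma\|_{2} \to 0$ with probability 1, where $\|\cdot\|_2$ is the spectral norm. Next, I would invoke Weyl's perturbation inequality for Hermitian matrices: if $A,B$ are $p\times p$ Hermitian with eigenvalues arranged in algebraically non-decreasing order as in \eqref{eq:eigen_order}, then for every $k \in \{1,\dots,p\}$,
\begin{equation*}
|\lambda^{(k)}(A) - \lambda^{(k)}(B)| \leq \|A-B\|_2.
\end{equation*}
This is a standard consequence of \eqref{eq:HornJohnson} applied twice (once to $A = B + (A-B)$, once to $B = A + (B-A)$), together with the identity $\|A-B\|_2 = \max\{|\lambda^{(1)}(A-B)|,|\lambda^{(p)}(A-B)|\}$ that holds for Hermitian $A-B$. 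I would spell out this two-line derivation explicitly since the paper only cites the one-sided version \eqref{eq:HornJohnson}.

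Applying this bound with $A = \Sigma_n$ and $B = \Sigma$ yields, with probability 1,
\begin{equation*}
|\lambda^{(k)}(\Sigma_n) - \lambda^{(k)}(\Sigma)| \leq \|\Sigma_n - \Sigma\|_2 \to 0
\end{equation*}
simultaneously for all $k \in \{1,\dots,p\}$, which is exactly the claim. The only subtlety is that $\Sigma_n$ is symmetric, so its eigenvalues are real and the ordering in \eqref{eq:eigen_order} is meaningful; for the mIS and mISadj estimators this is guaranteed by construction, and more generally for any reasonable covariance estimator.

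The main obstacle is essentially conceptual rather than technical: one must be careful that eigenvalues of non-Hermitian matrices need not be continuous in the strong sense stated (their multiset is continuous but individual ordered eigenvalues can only be tracked continuously for Hermitian matrices). Once symmetry is fixed, the two-sided Weyl bound makes the convergence uniform in $k$, so a single almost-sure set on which $\|\Sigma_n-\Sigma\|_2 \to 0$ works for all $p$ eigenvalue sequences at once, and no union-of-null-sets argument across $k$ is needed.
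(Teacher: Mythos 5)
Your argument is correct and complete. Note, however, that the paper does not prove this lemma at all: it is imported verbatim as Theorem~2 of the cited reference \citet{vats:fleg:jone:sve:2016}, so there is no internal proof to compare against. What you have supplied is the standard self-contained argument: reduce strong consistency to almost-sure convergence in the spectral norm (legitimate by norm equivalence in finite dimensions), then invoke the two-sided Weyl perturbation bound $|\lambda^{(k)}(A)-\lambda^{(k)}(B)|\leq\|A-B\|_2$, which you correctly derive by applying \eqref{eq:HornJohnson} in both directions together with the identity $\|A-B\|_2=\max\{|\lambda^{(1)}(A-B)|,\;|\lambda^{(p)}(A-B)|\}$ for Hermitian $A-B$. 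Your two cautionary remarks are both apt and worth keeping: the symmetry of $\Sigma_n$ is essential (ordered eigenvalues of non-Hermitian perturbations need not vary continuously in this sense, and the ordering \eqref{eq:eigen_order} is only meaningful for real spectra), and the uniformity of the Weyl bound over $k$ means a single almost-sure event suffices for all $p$ eigenvalue sequences simultaneously, so no union over $k$ is needed. In the context of this paper the symmetry requirement is satisfied, since $\Sigma_{n,m}$ is built from the symmetrized autocovariances $\widetilde{\gamma}_{n,t}$ and hence is symmetric by construction. The one thing your write-up buys over the paper's bare citation is that it makes the lemma verifiable from results already quoted in the appendix (namely \eqref{eq:HornJohnson}), at essentially no extra cost.
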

\begin{corollary}\label{corol:matconverge}
Let $\Sigma_n$ be a strongly consistent estimator of $\Sigma$,
then $|\Sigma_n|\to|\Sigma| ~ \text{with probability}~1$ as $n\to\infty$.
\end{corollary}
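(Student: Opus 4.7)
The plan is essentially a one-line reduction to the preceding Lemma on eigenvalue convergence. First I would recall the identity
\[
|A| = \prod_{k=1}^p \lambda^{(k)}(A),
\]
valid for any Hermitian (in particular, any real symmetric) $p \times p$ matrix $A$, where the eigenvalues are counted with algebraic multiplicity and ordered as in \eqref{eq:eigen_order}. This identity applies to both $\Sigma_n$ (on the event of full measure where $\Sigma_n$ is symmetric, which holds by construction of the estimators considered in the paper) and to $\Sigma$.

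Next I would invoke the preceding Lemma, which guarantees that for each fixed $k \in \{1,\ldots,p\}$, $\lambda^{(k)}(\Sigma_n) \to \lambda^{(k)}(\Sigma)$ with probability one as $n \to \infty$. Since there are only finitely many such $k$, the intersection of the $p$ probability-one events on which these convergences hold is again of probability one. On this intersection, continuity of the map $(x_1, \ldots, x_p) \mapsto \prod_{k=1}^p x_k$ from $\mathbb{R}^p$ to $\mathbb{R}$ yields
\[
\prod_{k=1}^p \lambda^{(k)}(\Sigma_n) \longrightarrow \prod_{k=1}^p \lambda^{(k)}(\Sigma),
\]
which by the determinant identity above is precisely $|\Sigma_n| \to |\Sigma|$ with probability one.

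There is no real obstacle here: all the substantive work (showing that the ordered eigenvalues of a strongly consistent estimator are themselves strongly consistent) is already done in the cited Lemma from \citet{vats:fleg:jone:sve:2016}. The corollary is simply the observation that the determinant is a continuous (indeed, polynomial) function of the eigenvalues, so the convergence lifts coordinate-wise through the product.
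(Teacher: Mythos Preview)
Your proposal is correct and matches the paper's intended argument: the paper states this result as an unproved corollary immediately following Lemma~\ref{lemma:sm_eigen}, so the implicit proof is exactly the one you give---take the product of the almost-surely convergent ordered eigenvalues. One could alternatively bypass the eigenvalue lemma entirely by noting that the determinant is a polynomial (hence continuous) function of the matrix entries, which would avoid the need to assume $\Sigma_n$ is Hermitian, but your route is the one the paper's organization suggests.
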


\subsubsection{Proof of Proposition 1}
\label{appsub:prop1}

We begin with the univariate case so $g:\mathcal{X}\to\mathbb{R}$. Let
$\mathcal{E}$ be the spectral decomposition measure associated with
transition kernel $P$ and $\mathcal{E}_g$ be the induced spectral
measure for $g$. Details on the spectral decomposition measure can be found in \citet{rudi:1991}, \citet{chan:geye:1994}, and \citet{HaggstromRosenthal2007}. Specifically, for all $t\in\mathbb{N}$,
\begin{equation}\label{eq:gam_t}
\gamma_t=\int_{-1}^1\lambda^t \mathcal{E}_g(d\lambda). \tag{A.4}
\end{equation}
It follows that for all $i\in\mathbb{N}$,
\[
\Gamma_i=\gamma_{2i}+\gamma_{2i+1} =
\int_{-1}^1\lambda^{2i}(1+\lambda) \mathcal{E}_g(d\lambda) 
\]
and
\[
\Gamma_i-\Gamma_{i+1} =
\int_{-1}^1\lambda^{2i}(1+\lambda)^2(1-\lambda)
\mathcal{E}_g(d\lambda) .
\]
Therefore, $\Gamma_i$ and $\Gamma_i-\Gamma_{i+1}$ must be non-negative. To prove Proposition \ref{prop:1}(i) that $\Gamma_i>0$ and Proposition \ref{prop:1}(ii) that $\Gamma_i-\Gamma_{i+1}>0$, we need to show that neither $\Gamma_i$ nor $\Gamma_i-\Gamma_{i+1}$ can be zero. For $i=0$,
\begin{equation}\label{eq:i=0,G}
\Gamma_0=\int_{-1}^1(1+\lambda)\mathcal{E}_g(d\lambda)=0  \quad \Leftrightarrow \quad \mathcal{E}_g(\{-1\})=1, \tag{A.5}
\end{equation}
and
\begin{equation}\label{eq:i=0,dG}
\Gamma_0-\Gamma_1 =
\int_{-1}^1(1+\lambda)^2(1-\lambda)\mathcal{E}_g(d\lambda) = 0 \quad \Leftrightarrow \quad \mathcal{E}_g(\{-1,1\}) = 1.\tag{A.6}
\end{equation}
For $i\in\mathbb{N}^+$,
\begin{equation}\label{eq:i>0,G}
\Gamma_i=\int_{-1}^1\lambda^{2i}(1+\lambda) \mathcal{E}_g(d\lambda)=0 \quad \Leftrightarrow \quad  \mathcal{E}_g(\{-1,0\})=1,\tag{A.7}
\end{equation}
and
\begin{equation}\label{eq:i>0,dG}
\Gamma_i-\Gamma_{i+1} =
\int_{-1}^1\lambda^{2i}(1+\lambda)^2(1-\lambda)\mathcal{E}_g(d\lambda)=0 \quad \Leftrightarrow \quad \mathcal{E}_g(\{-1,0,1\}) = 1.\tag{A.8}
\end{equation}

By \eqref{eq:i=0,G}--\eqref{eq:i>0,dG}, for an arbitrary $i\in\mathbb{N}$, a necessary condition for each of $\Gamma_i=0$ and $\Gamma_i-\Gamma_{i+1}=0$ is $\mathcal{E}_g(\{-1,0,1\}) = 1$. We now show that $\mathcal{E}_g(\{-1,0,1\}) = 1$ cannot hold under our assumptions, so that both $\Gamma_i$ and $\Gamma_i-\Gamma_{i+1}$ are non-zero, which completes the proof of Proposition \ref{prop:1}(i)--(ii).

If $\mathcal{E}_g$ is a point mass at 0, then \eqref{eq:gam_t} yields
\[
\gamma_t=\int_{-1}^1\lambda^t \mathcal{E}_g(d\lambda)=0
\]
for all $t\in\mathbb{N}^+$, which is trivial. Therefore, without loss of generality, we assume
\begin{equation}\label{eq:notall0}
\mathcal{E}_g(\{0\})<1. \tag{A.9}
\end{equation}
\citet{HaggstromRosenthal2007} showed that when $P$ is irreducible and aperiodic,
\begin{equation}\label{eq:supp-1,1}
\mathcal{E}_g(\{-1,1\})=0. \tag{A.10}
\end{equation}

It follows from \eqref{eq:notall0} and \eqref{eq:supp-1,1} that $\mathcal{E}_g(\{-1,0,1\})<1$. By previous arguments, we have proved Proposition~\ref{prop:1}(i)-(ii). That
is, for all $i\in\mathbb{N}$, $\Gamma_i>0$ and $\Gamma_i-\Gamma_{i+1}>0$.

Proposition~\ref{prop:1}(iii), namely $\lim_{i\to\infty}\Gamma_i=0$, follows from Lemma~\ref{lemma:harville} and the assumption that $\sum_{i=0}^\infty\Gamma_i$ converges. 

Finally, by Proposition~\ref{prop:1}(i)--(iii), we obtain Proposition~\ref{prop:1}(iv), i.e., $\{\Gamma_i: i\in\mathbb{N}\}$ is positive, decreasing, and converges to $0$.

We now turn to the multivariate case so $g:\mathcal{X}\to\mathbb{R}^p$
and $p\geq2$. Set $h=v^\top g$ for an arbitrary $v\in \mathbb{R}^p$ and
$v\neq0$.  Then $h:\mathcal{X}\to\mathbb{R}$ is measurable and square
integrable with respect to $F$. Recall that the Markov
chain is assumed stationary.  For $t\in\mathbb{N}$ define the lag $t$
autocovariance
\[
\gamma_t^*=\gamma_{-t}^*=\mathrm{cov}_F\{h(X_i),h(X_{i+t})\}
\]
and for $i\in\mathbb{N}$ define
$\Gamma_i^*=\gamma_{2i}^*+\gamma_{2i+1}^* $.
Notice that
\begin{align*}
\Gamma_i^*&=\gamma_{2i}^*+\gamma_{2i+1}^* 
=\mathrm{cov}_F\{ h(X_0),h(X_{2i})\} +\mathrm{cov}_F\{ h(X_0),h(X_{2i+1})\} \\
&=v^\top\mathrm{cov}_F \{ g(X_0),g(X_{2i})\} v+v^\top\mathrm{cov}_F\{ g(X_0),g(X_{2i+1})\} v\\
&=v^\top\gamma_{2i}v+v^\top\gamma_{2i+1}v
=v^\top(\gamma_{2i}+\gamma_{2i+1})v
=v^\top\Gamma_iv .
\end{align*}
By the univariate case considered above, $\Gamma_i^* >0$. Since $v$ is arbitrary, $\Gamma_i$ is positive definite. A similar argument shows that $\Gamma_i-\Gamma_{i+1}$ is positive
definite. This establishes Proposition~\ref{prop:1}(i)--(ii).

Use Lemma~\ref{lemma:harville} and notice that $\sum_{i=0}^\infty\Gamma_i$ converges by assumption.
We obtain $\lim_{i\to\infty}\Gamma_i=0$. Thus Proposition~\ref{prop:1}(iii) is proved.

Since $\Gamma_i$ is positive definite, $\xi_i>0$ for all
$i\in\mathbb{N}$.  Since $\Gamma_i-\Gamma_{i+1}$ is positive definite
we obtain from Lemma~\ref{lemma:matsorting} that $\xi_i>\xi_{i+1}$.
Hence $\xi_i \to 0$ as $i \to \infty$ which establishes
Proposition~\ref{prop:1}(iv). \hfill $\Box$

\subsubsection{Proposition 2}
\label{appsub:prop2}
For all $m\in\mathbb{N}$ let $\lambda_m$ be the smallest eigenvalue of
$\Sigma_m$.  Notice that $\Sigma_m-\Sigma_{m-1}=2\Gamma_m$ is positive
definite by Proposition~\ref{prop:1}. Then
Lemma~\ref{lemma:matsorting} implies $\lambda_m>\lambda_{m-1}$ and
hence$\{\lambda_m: m\in\mathbb{N}\}$ is monotonically increasing. Since
$\{\Sigma_m: m\in\mathbb{N}\}$ converges to the asymptotic covariance
matrix $\Sigma$, by Lemma \ref{lemma:sm_eigen} we have $$\lim_{m\to
  \infty}\lambda_m=\lambda>0,$$ where $\lambda$ is the smallest
eigenvalue of $\Sigma$.

If $\lambda_0\leq 0$, there exists a positive integer $m_0$ such that
$\lambda_m>0$ for $m\geq m_0$ and $\lambda_m\leq0$ for $m<m_0$. If
$\lambda_0>0$, then $\lambda_m>0$ for all $m\in\mathbb{N}$. In this
case, let $m_0=0$. Immediately we have that $\Sigma_m$ is positive
definite for $m\geq m_0$ and not positive definite for $m<m_0$. It
then follows that for all $m\geq m_0$, $|\Sigma_m|>0$.

Now let $m> m_0$ and notice that $\Sigma_m-\Sigma_{m-1}$ is positive
definite. Using Lemma~\ref{lemma:matsorting} we obtain for all $m>
m_0$, 
$
|\Sigma_m|>|\Sigma_{m-1}| .
$
By $\lim_{m\to\infty}\Sigma_m=\Sigma$ and Corollary
\ref{corol:matconverge}, 
\[
\lim_{m\to\infty} |\Sigma_m|=|\Sigma|.
\]
Therefore, $\{|\Sigma_m|:m=m_0,m_0+1,m_0+2,\ldots\}$ is positive, increasing, and converges to $|\Sigma|$. \hfill $\Box$

\subsection{Proofs of Theorems 1 and 2}
\label{app:theorems}
\begin{lemma} \label{lemma:cst_gamma} For all $t\in\mathbb{N}$, with
  probability 1, as $n \to \infty$, 
$
\gamma_{n,t} \to \gamma_t .
$
\end{lemma}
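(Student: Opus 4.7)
The plan is to expand $\gamma_{n,t}$ algebraically, recognize each summand as a partial time average of some function of the Markov chain, and invoke the Markov chain strong law of large numbers (SLLN) for each piece separately. Writing out the definition and centering,
\[
\gamma_{n,t} = \frac{1}{n}\sum_{i=1}^{n-t} g(X_i)g(X_{i+t})^\top - \left(\frac{1}{n}\sum_{i=1}^{n-t}g(X_i)\right)\mu_n^\top - \mu_n\left(\frac{1}{n}\sum_{i=1}^{n-t}g(X_{i+t})\right)^\top + \frac{n-t}{n}\,\mu_n\mu_n^\top.
\]
The goal is to show, term by term, that the right-hand side converges almost surely to $E_F[g(X_0)g(X_t)^\top] - \mu\mu^\top - \mu\mu^\top + \mu\mu^\top = \gamma_t$.

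The first term is the key one. I would argue that the ``lag-$t$ pair process'' $Y_i = (X_i, X_{i+t})$ is a measurable function of the block chain $(X_i,\ldots,X_{i+t})$, and as the original chain $\Phi$ is Harris ergodic with invariant distribution $F$, the block chain is Harris ergodic with invariant distribution given by the natural joint law $F(dx_0)P(x_0,dx_1)\cdots P(x_{t-1},dx_t)$. The Markov chain SLLN (e.g., Theorem~17.0.1 of \cite{MeynTweedie1993}) then applies to any function integrable against this joint law. Since $\gamma_t = \mathrm{cov}_F\{g(X_0),g(X_t)\}$ is assumed to exist, $g$ is square-integrable under $F$, and by Cauchy--Schwarz the function $(x,y)\mapsto g(x)g(y)^\top$ is integrable against $F(dx)P^t(x,dy)$. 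Writing $(1/n)\sum_{i=1}^{n-t} = ((n-t)/n)\cdot(1/(n-t))\sum_{i=1}^{n-t}$ and noting $(n-t)/n\to1$, the first term converges almost surely to $E_F[g(X_0)g(X_t)^\top]$.

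For the remaining terms, the same Harris SLLN applied to $g$ itself gives both $\mu_n\to\mu$ and $(1/n)\sum_{i=1}^{n-t}g(X_i)\to\mu$ and $(1/n)\sum_{i=1}^{n-t}g(X_{i+t})\to\mu$ almost surely (the last one because it is also a partial average of $g$ along the chain). Products and sums of almost-sure limits are preserved, so the second and third terms each tend to $\mu\mu^\top$, while the fourth tends to $\mu\mu^\top$ via $(n-t)/n\to 1$. Collecting gives $\gamma_{n,t}\to E_F[g(X_0)g(X_t)^\top] - \mu\mu^\top = \gamma_t$ almost surely.

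The main obstacle is the justification that the Harris ergodic SLLN transfers from the scalar chain to the pair (or block) chain with the correct stationary integral; this is routine once one writes the required averages as functionals of the block chain and checks integrability via the finite second moment of $g$ under $F$, but it is the only non-bookkeeping step. Everything else is algebraic manipulation combined with the continuity of sums and products under almost-sure convergence.
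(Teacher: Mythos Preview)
Your proof is correct and follows the same approach as the paper: expand $\gamma_{n,t}$ into the four terms you wrote, then invoke the Markov chain strong law term by term. In fact you give more detail than the paper does (the block-chain argument and the Cauchy--Schwarz integrability check), whereas the paper simply writes ``By repeated application of the Markov chain strong law'' and passes to the limit.
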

\begin{proof}
Notice that 
\begin{align*}
\gamma_{n,t}=&\frac{1}{n} \sum\limits_{i=1}^{n-t}\left\{g(X_{n,i})-\mu_n\right\}\left\{g(X_{n,i+t})-\mu_n\right\}^\top\\
=&\frac{1}{n}
\sum\limits_{i=1}^{n-t}g(X_{n,i})g(X_{n,i+t})^\top-\frac{1}{n}
\sum\limits_{i=1}^{n-t}g(X_{n,i})\mu_n^\top-
\frac{1}{n} \, \mu_n\sum\limits_{i=1}^{n-t}g(X_{n,i+t})^\top+
\frac{n-t}{n} \, \mu_n\mu_n^\top \; .
\end{align*}

By repeated application of the Markov chain strong law we see that,
with probability 1, as $n \to \infty$, 
\[
\gamma_{n,t} \to \mathrm{E}_F \{g(X_0)g(X_t)^\top \}-\mu\mu^\top=\mathrm{cov}_F\left\{ g(X_0),g(X_t)\right\} =\gamma_t
.
\]
\end{proof}

\begin{corollary} \label{corol:cst_Sigmai} For all $m\in\mathbb{N}$, with
  probability 1, as $n \to \infty$,
$
\Sigma_{n,m} \to \Sigma_m.
$
\end{corollary}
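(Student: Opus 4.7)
The plan is to deduce the corollary as a direct linear-combination argument on top of Lemma~\ref{lemma:cst_gamma}. Fix $m\in\mathbb{N}$. The estimator $\Sigma_{n,m}$ is a finite linear combination of the empirical autocovariances $\{\gamma_{n,t} : t\in\{-(2m+1),\ldots,2m+1\}\}$, so once each of these converges almost surely to the corresponding population quantity, the conclusion follows from the continuous mapping theorem applied to the (finite, hence continuous) linear combination. Since $m$ is fixed we only need convergence of finitely many terms, so there is no issue combining exceptional null sets: we take the intersection (over $t\in\{0,1,\ldots,2m+1\}$) of the probability-$1$ events on which $\gamma_{n,t}\to\gamma_t$ guaranteed by Lemma~\ref{lemma:cst_gamma}; this intersection still has probability~$1$.

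On that event, I would first pass from $\gamma_{n,t}$ to $\widetilde{\gamma}_{n,t}=(\gamma_{n,t}+\gamma_{n,-t})/2$. Using $\gamma_{n,-t}=\gamma_{n,t}^\top$ together with the continuity of transposition, $\widetilde{\gamma}_{n,t}\to(\gamma_t+\gamma_{-t})/2$, which is exactly the symmetric part of $\gamma_t$ appearing in the definition of $\Gamma_i$ at the population level. Summing two such convergent sequences yields $\Gamma_{n,i}=\widetilde{\gamma}_{n,2i}+\widetilde{\gamma}_{n,2i+1}\to\gamma_{2i}+\gamma_{2i+1}=\Gamma_i$ for each $i\in\{0,\ldots,m\}$.

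Finally I would assemble these pieces in the definition \eqref{eq:Sigma_n,m}, writing
\[
\Sigma_{n,m}=-\gamma_{n,0}+2\sum_{i=0}^{m}\Gamma_{n,i}\;\longrightarrow\;-\gamma_{0}+2\sum_{i=0}^{m}\Gamma_{i}=\Sigma_{m},
\]
where the displayed convergence is entrywise (equivalently, in any matrix norm) and holds on the probability-$1$ event identified above. There is no real obstacle here: the only thing to be mildly careful about is that the sum is finite, so convergence of the sum follows at once from convergence of each summand without any uniform-in-$i$ control; were $m$ allowed to depend on $n$ and tend to infinity, this step would require a genuinely separate argument, but for fixed $m$ the corollary is immediate.
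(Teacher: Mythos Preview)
Your proposal is correct and follows essentially the same approach as the paper, which simply states that the result is immediate from Lemma~\ref{lemma:cst_gamma}. You have merely spelled out the (finite) linear-combination argument in detail, which is exactly what ``immediate'' means here.
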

\begin{proof}
This follows immediately from Lemma \ref{lemma:cst_gamma}.
\end{proof}

\begin{lemma}\label{lemma:limXA}
If a sequence of random variables $X_1,X_2,\ldots$ converges to $X$ with probability 1, then, for an arbitrary $x\in\mathbb{R}$ such that $\Pr(X=x)=0$,
\begin{equation}\label{eq:leq}
\liminf_{n\rightarrow\infty}\{X_n\leq x\}=\{X\leq x\}~w.p.~1 \tag{A.11}
\end{equation}
and
\begin{equation}\label{eq:>}
\liminf_{n\rightarrow\infty}\{X_n> x\}=\{X>x\}~w.p.~1. \tag{A.12}
\end{equation}
\end{lemma}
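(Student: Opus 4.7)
The plan is to unpack the set-theoretic definition $\liminf_{n\to\infty} A_n = \bigcup_{N \geq 1} \bigcap_{n \geq N} A_n$, which is the event that $A_n$ occurs for all sufficiently large $n$, and then exploit the almost sure convergence on a single good set $\Omega_0 = \{\omega : X_n(\omega) \to X(\omega)\}$, which has probability 1 by hypothesis. I will prove each equality as a pair of inclusions sandwiched between $\{X < x\}$ and $\{X \leq x\}$ (resp.\ $\{X > x\}$ and $\{X \geq x\}$), and then close the gap using $\Pr(X = x) = 0$.

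First I would establish the easy inclusions. Fix $\omega \in \Omega_0$. If $X(\omega) < x$, then by convergence $X_n(\omega) \to X(\omega) < x$, so for all sufficiently large $n$ we have $X_n(\omega) \leq x$; hence $\omega \in \liminf_n \{X_n \leq x\}$. This shows $\Omega_0 \cap \{X < x\} \subseteq \liminf_n \{X_n \leq x\}$. Conversely, if $\omega \in \Omega_0$ and $\omega \in \liminf_n \{X_n \leq x\}$, then $X_n(\omega) \leq x$ for all $n$ beyond some $N$, so $X(\omega) = \lim_{n} X_n(\omega) \leq x$; hence $\Omega_0 \cap \liminf_n \{X_n \leq x\} \subseteq \{X \leq x\}$. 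Combining and using $\Pr(\Omega_0) = 1$ gives, with probability one,
\[
\{X < x\} \;\subseteq\; \liminf_{n\to\infty}\{X_n \leq x\} \;\subseteq\; \{X \leq x\}.
\]

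The final step is to collapse the sandwich. Since $\Pr(X = x) = 0$, the symmetric difference $\{X \leq x\} \setminus \{X < x\} = \{X = x\}$ is a null set, so both bounding events coincide almost surely. This yields \eqref{eq:leq}. The argument for \eqref{eq:>} is entirely parallel: on $\Omega_0 \cap \{X > x\}$, the strict inequality $X_n(\omega) > x$ holds eventually, giving $\Omega_0 \cap \{X > x\} \subseteq \liminf_n \{X_n > x\}$; conversely, if $X_n(\omega) > x$ eventually then $X(\omega) \geq x$, so $\Omega_0 \cap \liminf_n \{X_n > x\} \subseteq \{X \geq x\}$. Again $\{X \geq x\} \setminus \{X > x\} = \{X = x\}$ has probability zero, so the two sides agree almost surely.

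There is no real obstacle here; the proof is a routine measure-theoretic exercise. The only subtle point is the necessity of the assumption $\Pr(X = x) = 0$, which is exactly what rules out the ambiguous boundary case: on $\{X = x\}$, a sequence converging to $x$ from above stays in $\{X_n > x\}$ eventually while a sequence converging from below stays in $\{X_n \leq x\}$ eventually, so the $\liminf$ of the events need not match $\{X \leq x\}$ or $\{X > x\}$ on $\{X = x\}$ unless that set is null.
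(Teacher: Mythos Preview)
Your proof is correct and is essentially the same as the paper's: the paper shows directly that each of the two set differences $\liminf_n\{X_n\le x\}\setminus\{X\le x\}$ and $\{X\le x\}\setminus\liminf_n\{X_n\le x\}$ is contained in $\{\lim_n X_n\neq X\}\cup\{X=x\}$, while you phrase the same two observations as the sandwich $\{X<x\}\subseteq\liminf_n\{X_n\le x\}\subseteq\{X\le x\}$ on the convergence set and then collapse it via $\Pr(X=x)=0$. The underlying logic and the role of the hypothesis $\Pr(X=x)=0$ are identical.
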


\begin{proof}
We only prove the first part. The second part can be shown by a similar argument.

Recall that two events $A$ and $B$ are equal almost surely if both of
the events $A\setminus B$ and $B\setminus A$ are null sets
\cite[][p. 13]{FG1996}.  Thus we need only show that both
$\liminf_{n\rightarrow\infty}\{X_n\leq x\}\setminus\{X\leq x\}$ and
$\{X\leq x\}\setminus\liminf_{n\rightarrow\infty}\{X_n\leq x\}$ are
null sets.

Suppose $\omega\in\liminf_{n\rightarrow\infty}\{X_n\leq
x\}\setminus\{X\leq x\}$. By
definition, 
\[
\omega\in\liminf_{n\rightarrow\infty}\{X_n\leq x\}
\] 
is equivalent to saying that there exists some $n$ such that for all
$m\geq n$, $X_m(\omega)\leq x$. This implies
that 
\[
\lim_{n\rightarrow\infty}X_n(\omega)\leq x<X(\omega),
\] 
where the second inequality is due to $\omega\notin\{X\leq x\}$. It
follows that
\[
\omega\in\Bigl\{\lim_{n\rightarrow\infty}X_n\neq X\Bigr\}.
\] 
Thus we have that
\[
\liminf_{n\rightarrow\infty}\{X_n\leq x\}\setminus\{X\leq x\} \subset
\left\{\lim_{n\rightarrow\infty}X_n\neq X\right\}
\]
which is a null set because $X_n\overset{a.s.}{\rightarrow}X$.

Suppose $\omega\in\{X\leq
x\}\setminus\liminf_{n\rightarrow\infty}\{X_n\leq x\}$. By definition, 
\[ 
\omega\notin\liminf_{n\rightarrow\infty}\{X_n\leq x\}
\] 
is equivalent to saying that for all $n$, there exists some $m\geq n$
such that $X_m(\omega)>x$. This implies that
\[
\lim_{n\rightarrow\infty}X_n(\omega)\geq x\geq X(\omega),
\]
where the second inequality is due to $\omega\in\{X\leq x\}$.  It
follows that 
\[
\omega\in\left\{\lim_{n\rightarrow\infty}X_n\neq X\right\}\bigcup\{X=x\}.
\] 
Thus we have that
\[
\{X\leq x\}\setminus\limsup_{n\rightarrow\infty}\{X_n\leq x\} \subset
\left\{\lim_{n\rightarrow\infty}X_n\neq X\right\}\bigcup\{X=x\},
\]
which is a null set.

So far we have proved \eqref{eq:leq}. A similar argument can be used to prove \eqref{eq:>}.
\end{proof}

Equipped with the preceding results, we now prove the following lemma in preparation for Theorems 1 and 2.

Recall that $m_0$ is a non-negative integer such that $\Sigma_m$ is positive definite for $m\geq m_0$ and not positive definite for $m< m_0$. Also recall that $s_n$ is the smallest integer such that
$\Sigma_{n,s_n}$ is positive definite and that $t_n$ is the largest
integer $m$ ($s_n\leq m\leq \lfloor n/2-1\rfloor$) such that
$|\Sigma_{n,i}|>|\Sigma_{n,i-1}|$ for all $i \in \{s_n+1,\ldots,m\}$. The smallest eigenvalues of $\Sigma_m$ and $\Sigma_{n,m}$ are denoted $\lambda_{m}$ and $\lambda_{n,m}$, respectively.

\begin{lemma}\label{lemma:Indicator0}
Suppose $\liminf_{n\rightarrow\infty}\{\lambda_{n,m_0-1}\leq0\}$
occurs with probability 1.  For all $K\geq m_0$,
\[
\Pr\left(\liminf_{n\to\infty}\{s_n=m_0,t_n\geq K\}\right)=1.
\]
\end{lemma}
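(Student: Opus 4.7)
The strategy is to write the event $\{s_n=m_0,\,t_n\geq K\}$ as a \emph{finite} intersection of events each of which is known to hold eventually with probability 1, and then use the fact that $\liminf$ commutes with finite intersections of events. Unwinding the definitions of $s_n$ and $t_n$, and noting that for $n$ large enough $\lfloor n/2-1\rfloor\geq K$, one has
\[
\{s_n=m_0,\,t_n\geq K\} = \Biggl(\bigcap_{m=0}^{m_0-1}\{\lambda_{n,m}\leq 0\}\Biggr)\cap\{\lambda_{n,m_0}>0\}\cap\bigcap_{i=m_0+1}^{K}\bigl\{|\Sigma_{n,i}|>|\Sigma_{n,i-1}|\bigr\},
\]
where the first intersection is vacuous when $m_0=0$.

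Next I would verify that each factor holds eventually with probability 1. For $m<m_0-1$, the monotonicity argument in the proof of Proposition~\ref{prop:2} gives $\lambda_m<\lambda_{m_0-1}\leq 0$, so $\lambda_m<0$ strictly. Corollary~\ref{corol:cst_Sigmai} gives $\Sigma_{n,m}\to\Sigma_m$ a.s., hence by Lemma~\ref{lemma:sm_eigen} $\lambda_{n,m}\to\lambda_m$ a.s.; then Lemma~\ref{lemma:limXA} applied with $x=0$ (valid since $\lambda_m$ is a deterministic constant with $\lambda_m\neq 0$) yields $\Pr\bigl(\liminf_n\{\lambda_{n,m}\leq 0\}\bigr)=1$. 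The case $m=m_0-1$ is exactly the standing hypothesis of the lemma. For $m=m_0$, Proposition~\ref{prop:2} gives $\lambda_{m_0}>0$, and Lemma~\ref{lemma:limXA} once more yields $\Pr\bigl(\liminf_n\{\lambda_{n,m_0}>0\}\bigr)=1$. For each $i\in\{m_0+1,\ldots,K\}$, Proposition~\ref{prop:2}(ii) gives $|\Sigma_i|-|\Sigma_{i-1}|>0$, while Corollary~\ref{corol:matconverge} combined with Corollary~\ref{corol:cst_Sigmai} gives $|\Sigma_{n,i}|-|\Sigma_{n,i-1}|\to |\Sigma_i|-|\Sigma_{i-1}|$ a.s.; another application of Lemma~\ref{lemma:limXA} yields $\Pr\bigl(\liminf_n\{|\Sigma_{n,i}|>|\Sigma_{n,i-1}|\}\bigr)=1$.

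Finally I would assemble the pieces. For any two event sequences, $\liminf_n(A_n\cap B_n)=\liminf_n A_n\cap \liminf_n B_n$, and by induction this extends to any finite intersection. Since we are intersecting only finitely many events (there are $m_0+1$ eigenvalue events and $K-m_0$ determinant events), and each has probability 1 under $\liminf_n$, the full intersection has probability 1 under $\liminf_n$ as well, proving the lemma.

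The only real delicacy is at $m=m_0-1$: Lemma~\ref{lemma:limXA} requires that the limit places no mass on the threshold, and since $\lambda_{m_0-1}$ could actually equal $0$, one cannot directly deduce eventual nonpositivity of $\lambda_{n,m_0-1}$ from $\lambda_{n,m_0-1}\to\lambda_{m_0-1}$ alone. That is exactly why this boundary behavior is taken as a hypothesis rather than derived; all other boundary values $\lambda_m$ (for $m\neq m_0-1$) and $|\Sigma_i|-|\Sigma_{i-1}|$ are strictly separated from $0$, so Lemma~\ref{lemma:limXA} applies at those coordinates for free.
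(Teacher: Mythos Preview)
Your proof is correct and follows essentially the same route as the paper: decompose $\{s_n=m_0,\,t_n\geq K\}$ into a finite intersection of eigenvalue and determinant-difference events, push $\liminf_n$ through the finite intersection, and verify each factor via Corollary~\ref{corol:cst_Sigmai}, Lemma~\ref{lemma:sm_eigen}, Corollary~\ref{corol:matconverge}, and Lemma~\ref{lemma:limXA}, isolating the borderline case $m=m_0-1$ as the hypothesis. The paper's argument differs only cosmetically (it introduces the shorthand $\Delta_i=|\Sigma_i|-|\Sigma_{i-1}|$ and writes out the intermediate almost-sure equalities before invoking Proposition~\ref{prop:2}), but the structure and key steps are the same.
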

\begin{proof}
  Define $\Delta_{m}=|\Sigma_m| -|\Sigma_{m-1}|$ and
  $\Delta_{n,m}=|\Sigma_{n,m}| -|\Sigma_{n,m-1}|$.  Notice that
\begin{align*}
\{s_n=m_0,t_n\geq K\}
&=\{s_n=m_0\} \cap \{ \Delta_{n,i}>0\text{ for all }i
  \text{ such that } m_0<i\leq K\}\\
&=\left(\bigcap_{m<m_0}\{\Sigma_{n,m}\text{ is not positive definite}\}\right)\bigcap\{\Sigma_{n,m_0}\text{ is positive definite}\} 
\bigcap \left(\bigcap_{m_0<i\leq K}\{ \Delta_{n,i}>0\}\right)\\
&=\left(\bigcap_{m<m_0}\{\lambda_{n,m}\leq0\}\right)\bigcap\{\lambda_{n,m_0}>0\}\bigcap \left(\bigcap_{m_0<i\leq K}\{ \Delta_{n,i}>0\}\right),
\end{align*}
where $\lambda_{n,m}$ denotes the smallest eigenvalue of $\Sigma_{n,m}$. Then we write
\begin{align}
& \ \ \ \ \liminf_{n\to\infty}\{s_n=m_0,t_n\geq K\}\nonumber\\
&=\left(\bigcap_{m<m_0}\liminf_{n\to\infty}\{\lambda_{n,m}\leq0\}\right)\bigcap\liminf_{n\to\infty}\{\lambda_{n,m_0}>0\}\bigcap \left(\bigcap_{m_0<i\leq K}\liminf_{n\to\infty}\{ \Delta_{n,i}>0\}\right)\label{eq:intersec}. \tag{A.13}
\end{align}
By Lemma \ref{lemma:sm_eigen}, Corollary \ref{corol:matconverge} and
Corollary \ref{corol:cst_Sigmai}, for all $m$, with probability 1,
\begin{equation}\label{eq:cst_lambda}
\lambda_{n,m} \to \lambda_m\text{ as }n\to\infty, \tag{A.14}
\end{equation}
and for all $i$, with probability 1,
\begin{equation}\label{eq:cst_delta}
\Delta_{n,i} \to \Delta_i\text{ as } n\to\infty. \tag{A.15}
\end{equation}
By Proposition \ref{prop:2}(i), $\lambda_{m_0}>0$ so that
$\Pr(\lambda_{m_0}=0)=0$ and $\lambda_m\leq0$ for all $m<m_0$. In
particular, $\lambda_m<0$ so that $\Pr(\lambda_{m}=0)=0$ for all
$m<m_0-1$. By Proposition \ref{prop:2}(ii), $\Delta_i>0$ so
that $\Pr(\Delta_i=0)=0$ for $i>m_0$. Then by Lemma \ref{lemma:limXA} we
have that for all $m<m_0-1$,
\begin{equation}\label{eq:m0-1}
\liminf_{n\to\infty}\{\lambda_{n,m}\leq0\}\overset{a.s.}{=}\left\{\lim_{n\to\infty}\lambda_{n,m}\leq0\right\}, \tag{A.16}
\end{equation}
\[
\liminf_{n\to\infty}\{\lambda_{n,m_0}>0\}\overset{a.s.}{=}\left\{\lim_{n\to\infty}\lambda_{n,m_0}>0\right\},
\]
and for $i>m_0$
\[
\liminf_{n\to\infty}\{ \Delta_{n,i}>0\}\overset{a.s.}{=}\left\{\lim_{n\to\infty}\Delta_{n,i}>0\right\}.
\]
Notice that \eqref{eq:m0-1} holds for $m=m_0-1$ if
$\lambda_{m_0-1}<0$. When $\lambda_{m_0-1}=0$, \eqref{eq:m0-1} is true
only if $\liminf_{n\rightarrow\infty}\{\lambda_{n,m_0-1}\leq0\}$
occurs with probability 1. 

Under the preceding assumption, we continue to write \eqref{eq:intersec} as
\begin{align}
\liminf_{n\to\infty}\{s_n=m_0,t_n\geq K\} 
&=\left(\bigcap_{m<m_0}\liminf_{n\to\infty}\{\lambda_{n,m}\leq0\}\right)\bigcap\liminf_{n\to\infty}\{\lambda_{n,m_0}>0\}\bigcap \left(\bigcap_{m_0<i\leq K}\liminf_{n\to\infty}\{ \Delta_{n,i}>0\}\right)\nonumber\\
&\overset{a.s.}{=}\left(\bigcap_{m<m_0} \left\{\lim_{n\to\infty}\lambda_{n,m}\leq0\right\}\right)\bigcap \left\{\lim_{n\to\infty}\lambda_{n,m_0}>0\right\} \bigcap \left(\bigcap_{m_0<i\leq K} \left\{ \lim_{n\to\infty}\Delta_{n,i}>0\right\}\right)\label{eq:lim_intersec}. \tag{A.17}
\end{align}
By Proposition \ref{prop:2}(i), $\lambda_{m_0}>0$ and $\lambda_m\leq0$ for all $m<m_0$. Then by \eqref{eq:cst_lambda} we have for $m<m_0$
\begin{equation}\label{eq:1as1}
\Pr\left(\lim_{n\to\infty}\lambda_{n,m}\leq0\right)\geq \Pr\left(\lim_{n\to\infty}\lambda_{n,m}=\lambda_m\right)=1,\tag{A.18}
\end{equation}
and
\begin{equation}\label{eq:2as1}
\Pr\left(\lim_{n\to\infty}\lambda_{n,m_0}>0\right)\geq \Pr\left(\lim_{n\to\infty}\lambda_{n,m_0}=\lambda_{m_0}\right)=1. \tag{A.19}
\end{equation}
By Proposition \ref{prop:2}(ii), $\Delta_i>0$ for $i>m_0$. Then by \eqref{eq:cst_delta} we have
\begin{equation}\label{eq:3as1}
\Pr\left(\lim_{n\to\infty}\Delta_{n,i}>0\right)\geq \Pr\left(\lim_{n\to\infty}\Delta_{n,i}=\Delta_{i}\right)=1. \tag{A.20}
\end{equation}
It follows from \eqref{eq:1as1}--\eqref{eq:3as1} that
\[
\Pr\left\lbrace\left(\bigcap_{m<m_0} \left\{\lim_{n\to\infty}\lambda_{n,m}\leq0\right\}\right)\bigcap \left\{\lim_{n\to\infty}\lambda_{n,m_0}>0\right\}\bigcap \left(\bigcap_{m_0<i\leq K}\left\{ \lim_{n\to\infty}\Delta_{n,i}>0\right\}\right)\right\rbrace=1.
\]
Then by \eqref{eq:lim_intersec} we obtain the result.
\end{proof}

\begin{remark}
  Consider the assumption that
  $\liminf_{n\rightarrow\infty}\{\lambda_{n,m_0-1}\leq0\}$ occurs with
  probability 1. If $m_0=0$, then this assumption is not required for
  the Lemma; recall Remark~\ref{rm:asp}.  In addition, the assumption
  holds if $\Sigma_{m_0-1}$ is not positive semi-definite.  Recall
  from Proposition~\ref{prop:2}(i) we have that $\Sigma_{m_0-1}$ is not
  positive definite but, of course, it may still be positive
  semi-definite.
\end{remark}

\subsubsection{Theorem 1: Feasibility of the estimation method}
\label{appsub:theorem1}
\begin{proof}
When $K\geq m_0$ and $n>2m_0$, 
$
\{s_n\text{ exists}\}\supset\{s_n=m_0\}\supset\{s_n=m_0,t_n\geq K\}.
$
Then the result follows from Lemma~\ref{lemma:Indicator0}.
\end{proof}

\subsubsection{Theorem 2: Overestimation for the Asymptotic Generalized Variance of the Monte Carlo Error}
\label{appsub:theorem2}
\begin{proof}
We need to prove, for all $\epsilon>0$,
\begin{equation}\label{eq:liminfty}
\Pr\left(\bigcap_{n=N}^\infty\{|\Sigma_{n,t_n}|>|\Sigma|-\epsilon\}\right)\to1 \text{ as }N\to\infty. \tag{A.21}
\end{equation}
Recall that $\Delta_i$ is defined as $\Delta_i=|\Sigma_i|-|\Sigma_{i-1}|$.

By Proposition \ref{prop:2}(ii) that $\lim_{m\to\infty}|\Sigma_m|=|\Sigma|$, we can write
$$\sum\limits_{i=m_0+1}^\infty\Delta_{i}=|\Sigma|-|\Sigma_{m_0}|<\infty,$$
so $\sum\limits_{i=m_0+1}^\infty\Delta_{i}$ converges; and hence the tail must converge to 0.
Therefore, for all $\epsilon>0$, there exists $K_\epsilon\geq m_0$ such that
\begin{equation}\label{eq:Sigmatailbdd}
|\Sigma|-|\Sigma_{K_\epsilon}|=\sum_{i=K_\epsilon+1}^\infty\Delta_{i}<\epsilon/2. \tag{A.22}
\end{equation}
Notice that
\begin{align}
 \left\{|\Sigma_{n,t_n}|>|\Sigma|-\epsilon\right\} 
\supset\left\{|\Sigma_{n,t_n}|>|\Sigma|-\epsilon\right\}\bigcap\left\{s_n=m_0,t_n\geq K_\epsilon\right\} \supset\left\{|\Sigma_{n,K_\epsilon}|>|\Sigma|-\epsilon\right\}\bigcap\left\{s_n=m_0,t_n\geq K_\epsilon\right\}\label{eq:intersec2}. \tag{A.23}
\end{align}
The second step in \eqref{eq:intersec2} is due to the definition of mIS:
\begin{center}
``$s_n=m_0$ and $t_n\geq K_\epsilon$ for some $K_\epsilon\geq m_0$" implies ``$|\Sigma_{n,t_n}|\geq|\Sigma_{n,K_\epsilon}|$".
\end{center}
It follows directly from \eqref{eq:intersec2} that
\begin{align*}
 \bigcap_{n=N}^\infty\left\{|\Sigma_{n,t_n}|>|\Sigma|-\epsilon\right\} \supset\left(\bigcap_{n=N}^\infty\left\{|\Sigma_{n,K_\epsilon}|>|\Sigma|-\epsilon\right\}\right)\bigcap\left(\bigcap_{n=N}^\infty\left\{s_n=m_0,t_n\geq K_\epsilon\right\}\right) .
\end{align*}
Therefore, to prove \eqref{eq:liminfty} it suffices to show
\begin{equation}\label{eq:part1}
\Pr\left(\bigcap_{n=N}^\infty\{ |\Sigma_{n,K_\epsilon}|>|\Sigma|-\epsilon\}\right)\to1 \text{ as }N\to\infty, \tag{A.24}
\end{equation}
and
\begin{equation}\label{eq:part2}
\Pr\left(\bigcap_{n=N}^\infty\{s_n=m_0,t_n\geq K_\epsilon\}\right)\to1 \text{ as }N\to\infty. \tag{A.25}
\end{equation}
By the continuity of measure, \eqref{eq:part2} is equivalent to Lemma
\ref{lemma:Indicator0} and thus holds true. Then it remains to prove \eqref{eq:part1}.

By Corollary \ref{corol:matconverge} and \ref{corol:cst_Sigmai}, with
probability 1,
$
|\Sigma_{n,K_\epsilon}| \to |\Sigma_{K_\epsilon}|\text{ as }n\to\infty,
$
which gives
\begin{equation}\label{eq:SigmaKepsiloncst}
\Pr\left(\bigcap_{n=N}^\infty\{ \mathrm{abs}(|\Sigma_{n,K_\epsilon}|-|\Sigma_{K_\epsilon}|)<\epsilon/2\}\right)\to1\text{ as }N\to\infty \tag{A.26}
\end{equation}
where $\mathrm{abs}(\cdot)$ denotes absolute value.

When $|\Sigma|-|\Sigma_{K_\epsilon}|<\epsilon/2$ as in \eqref{eq:Sigmatailbdd},
$
\mathrm{abs}(|\Sigma_{n,K_\epsilon}|-|\Sigma_{K_\epsilon}|)<\epsilon/2~ \text{ implies }|\Sigma_{n,K_\epsilon}|>|\Sigma|-\epsilon,
$
so
\begin{equation}\label{eq:sim}
\Pr \left(\bigcap_{n=N}^\infty\{\mathrm{abs}(|\Sigma_{n,K_\epsilon}|-|\Sigma_{K_\epsilon}|)<\epsilon/2\} \right)\leq \Pr\left(\bigcap_{n=N}^\infty\{|\Sigma_{n,K_\epsilon}|>|\Sigma|-\epsilon\}\right). \tag{A.27}
\end{equation}
Putting \eqref{eq:SigmaKepsiloncst} and \eqref{eq:sim} together, we obtain \eqref{eq:part1}.
\end{proof}

\subsection{Confidence region with the univariate approach}
\label{app:CR}
We briefly state here the current methods for constructing confidence regions with univariate estimators. Let $\sigma(i)^2$ denote the $(i,i)$th entry of $\Sigma$. We treat the problem as $p$ univariate cases, i.e., to estimate $\sigma(i)^2$ using univariate samples. Then we construct cube-shaped confidence regions.

Let $\mu_n(i)$ be the $i$th component of $\mu_n$, and $\sigma_n(i)^2$ be the estimator for $\sigma(i)^2$.
The uncorrected confidence region is given by
\[
C_n=\begin{Bmatrix}
\mu_n(1)\pm z_{1-\alpha/2}\sigma_n(1)/\sqrt{n}\\
\mu_n(2)\pm z_{1-\alpha/2}\sigma_n(2)/\sqrt{n}\\
\vdots\\
\mu_n(p)\pm z_{1-\alpha/2}\sigma_n(p)/\sqrt{n}
\end{Bmatrix}
\]
with a volume of
\[
\left(\frac{2z_{1-\alpha/2}}{\sqrt{n}}\right)^p\prod_{i=1}^n\sigma_n(i).
\]
The Bonferroni confidence region for $\mu$ is
\[
B_n=\begin{Bmatrix}
\mu_n(1)\pm z_{1-\alpha/2p}\sigma_n(1)/\sqrt{n}\\
\mu_n(2)\pm z_{1-\alpha/2p}\sigma_n(2)/\sqrt{n}\\
\vdots\\
\mu_n(p)\pm z_{1-\alpha/2p}\sigma_n(p)/\sqrt{n}
\end{Bmatrix}
\]
with a volume of
\[
\left(\frac{2z_{1-\alpha/2p}}{\sqrt{n}}\right)^p\prod_{i=1}^n\sigma_n(i).
\]

\bigskip
\noindent
\textbf{References}

\end{document}